\newtheorem{lemma}{Lemma}
\newtheorem{theorem}{Theorem}
\newtheorem{prop}{Proposition}
\newcommand{\dd}{\,\mathrm{d}\,}
\DeclareMathOperator*{\argmin}{argmin}
\DeclareMathOperator{\MSE}{MSE}
\DeclareMathOperator{\Var}{Var}
\DeclareMathOperator{\Span}{Span}
\DeclareMathOperator{\sign}{sign}
\DeclareMathOperator{\dist}{dist}
\definecolor{c1}{rgb}{0,  0, 0}
\definecolor{c2}{rgb}{0,  0, 0}
\definecolor{c3}{rgb}{128,  0, 128}
\definecolor{ccol}{rgb}{.9,.35,0}
\DeclareRobustCommand\full  {\tikz[baseline=-0.6ex]\draw[c1, line width=0.25mm] (0,0)--(0.4,0);} 
\begin{document}

\title{Penalized Spline M-Estimators for Discretely Sampled Functional Data: Existence and Asymptotics}
\author{Ioannis Kalogridis}
\affil{School of Mathematics and Statistics, University of Glasgow, United Kingdom}
\date{\today}
\maketitle
	
\begin{abstract}

Location estimation is a central problem in functional data analysis. In this paper, we investigate penalized spline estimators of location for discretely sampled functional data under a broad class of convex loss functions. Our framework generalizes and extends previously derived results for non-robust estimators to a broad penalized M-estimation framework. The analysis is built on two general-purpose non-asymptotic theoretical tools: (i) a non-asymptotic existence result for penalized spline estimators under minimal design conditions, and (ii) a localization lemma that captures both stochastic variability and approximation error. Under mild assumptions, we establish optimal convergence rates and identify the small-- and large--knot regimes, along with the critical breakpoint known from penalized spline theory in nonparametric regression. Our results imply that parametric rates are attainable even with discretely sampled data and numerical experiments demonstrate that our estimators match or exceed  the robustness of competing estimators while being considerably more computationally attractive.
\end{abstract}

{Keywords:} Discretely-sampled functional data, regularization,  asymptotics, robustness.

{MSC 2020}:  62G08, 62G35, 62G20.

\section{Introduction}

Functional data analysis (FDA) is a modern branch of Statistics dealing with the analysis of complex high-dimensional objects that have an underlying functional structure \citep[see, e.g.][]{Ramsay:2005}. Consider, in particular, a sample of independent and identically distributed (i.i.d.) $\mathcal{L}^2([0,1])$ processes $X_1, \ldots, X_n$. A fundamental problem in FDA is the estimation of the mean function $\mu_0(t) = \mathbb{E}[X(t)]$ from noisy, discretely observed functional data of the form
\begin{align}
\label{eq:discrfd}
Y_{ij}=X_i(T_{ij})+\zeta_i(T_{ij}), \quad (j=1, \ldots, m_i;\ i=1,\ldots,n),
\end{align}
where  the $\zeta_i$ are independent and identically distributed $\mathcal{L}^2([0,1])$ measurement error or noise processes and $T_{ij} \in [0,1]$ are random sampling points. The mean function $\mu_0(t)$ provides not only a concise summary of the main features of the data, but also a crucial ingredient  in downstream tasks such as functional principal component analysis, regression, and classification.

If the random functions $X_1, \ldots, X_n$ were completely observed without noise, the mean function $\mu_0$ could be readily estimated, for example, with the sample mean $\bar{X}(t) = n^{-1} \sum_{i=1}^n X_i(t)$. Under finite second moments, $\bar{X}$ converges to $\mu_0$ at the parametric rate $n^{-1}$ in the $\mathcal{L}^2([0,1])$-metric \citep[see, e.g.][Section 8.1]{Hsing:2015}, so that in this ideal scenario $\bar{X}$ is an accurate estimator. In practical settings such as \eqref{eq:discrfd}, however, the sample mean performs poorly for two important reasons: (i) there may be regions of $[0,1]$ with few or even no design points making the sample mean highly variable and (ii) the sample mean directly incorporates noise from the $\zeta_i(T_{ij})$. These drawbacks are also shared by other estimators requiring completely observed data, such as the popular spatial median \citep[see, e.g.][]{Gerv:2008}. These issues mean that in most practical situations estimators relying on completely observed noise-free data are not viable and alternative estimation methods need to be considered. 

To address these notable shortcomings of estimators requiring complete data, several smoothing estimators have been proposed in the FDA literature over the years. Examples include the smoothing spline estimator of \citep{Cai:2011}, who also derived the optimal rates of convergence in the setting of discretely observed functional data, the kernel local linear estimator of \citep{Li:2010,Zhang:2016}, and the penalized spline estimator of \citep{Xiao:2020}. These estimators extend nonparametric regression techniques to functional data with repeated (dependent) measurements and, under second-moment conditions, achieve the optimal rates of convergence. A common weakness, nevertheless, is that, since all three of these estimators rely on the minimization of penalized or local square losses, they are extremely sensitive to outlying observations: even a single outlying $Y_{ij}$ can cause severe distortion on the estimates.

Robust functional data analysis with discretely sampled functional data is a far less developed field of study. For location estimation, the only theoretically investigated robust  estimator of which we are aware is the smoothing spline estimator of \citep{Kal:2023b}, who replaced the square loss in \citep{Cai:2011} with a robust loss function (e.g., the absolute loss) and showed that the optimal rates of convergence remain attainable.  While robust, a serious drawback of the approach of these authors is its dimensionality: smoothing spline estimators require as many basis functions as the number of distinct $T_{ij}$ becoming computationally prohibitive unless the design grid is very small, that is, the functional data is sparsely observed. The difficulty is compounded by the absence of closed-form solutions for general loss functions, which forces the use of iterative algorithms that tend to converge slowly in high-dimensional problems.

As a remedy to either the lack of robustness or the computational cost of existing methods, this paper studies a flexible family of lower-rank penalized spline M-estimators. Robustness is achieved via the choice of suitable loss function, while computational efficiency follows from the construction itself: the number of basis functions can be chosen independently of the grid size. We derive detailed convergence rates for these estimators and show that they remain asymptotically optimal, thereby generalizing the results of \citet{Xiao:2020} from the squared loss to a broad family of loss functions. Our novel methodology based on functional analysis may be of independent mathematical interest. We also give an easily checkable finite-sample existence condition for penalized spline M-estimators, applicable far beyond the FDA setting that is primarily considered herein. 

The rest of the paper is organized as follows. Section~\ref{sec:2} describes the proposed estimators and presents our general finite-sample existence theorem. Section~\ref{sec:3} develops the asymptotic theory in a reproducing kernel Hilbert space (RKHS) framework, including an important localization result of independent interest, and establishes optimal asymptotic convergence rates.  Section~\ref{sec:4} reports numerical studies confirming the robustness and computational gains of our method while Section~\ref{sec:5} contains our concluding remarks. For the convenience of the reader, we include our proofs or proof sketches within the main text and defer the supporting technical lemmas to the appendix at the end of this paper.

\section{Robust Penalized Spline Estimators for Functional Data}
\label{sec:2}

To introduce the proposed family of estimators, we recall that a spline is a piecewise polynomial that is smoothly connected at the joints (knots). Specifically, for any integers $p \geq 1$ and $K \geq 1$, let $S_{K}^p([0,1])$ denote the set of spline functions with interior knots $0<x_{p+1}<\ldots<x_{K+p}<1$. Then, for $p=1$, $S_{K}^p([0,1])$ consists of step functions with jumps at the interior knots while for $p \geq 2$:
\begin{eqnarray*}
S_{K}^p([0,1]) = \left\{ s \in \mathcal{C}^{p-2}([0,1]) :\ s(t)\ \text{is a polynomial of degree } (p-1) \text{ on each } [x_i, x_{i+1}] \right\}.
\end{eqnarray*}
It can be shown that $S_{K}^p([0,1])$ is a vector space of dimension $K+p$ and 
a numerically convenient basis is given by the B-splines \citep[see, e.g.][Chapter IX]{DB:2000}. Specifically, define $2p$ additional knots by setting $x_{1}=\ldots =x_p = 0$ and $x_{K+p+1} = \ldots =  x_{K+2p}=1$. The B-splines are defined recursively as
\begin{align*}
B_{j,1}(t) & = \begin{cases} 1, & x_j\leq t < x_{j+1}, \\ 
0, & \text{otherwise}, \end{cases}
\\ 
B_{j,p+1}(t) & = \frac{t-x_{j}}{x_{j+p+1}-x_j}B_{j,p}(t) + \frac{x_{j+p}-t}{x_{j+p}-x_{j+1}}B_{j+1,p}(t),
\end{align*}
with the convention $0/0 = 0$. This gives rise to precisely $K+p$ B-spline functions $B_{1,p}, \ldots, B_{K+p,p}$. It is not difficult to see that each $B_{j,p}: [0,1] \to [0,1]$ and its support $\overline{\{t \in [0,1]: B_{j,p}(t) > 0\}}$ is $[x_j,x_{j+p}]$. This implies that any two B-splines $B_{j,p}(t)$ and $B_{k,p}(t)$ are orthogonal in $\mathcal{L}^2([0,1])$ so long as $|j-k| \geq p$.

We now embed this spline construction into the framework of penalized M-estimators. Let $p \in \mathbbm{N}$ be such that $p \geq 2$ and choose an integer $r \in \mathbbm{N}$ such that $r < p$. The proposed estimator for $\mu_0$ is given by 
\begin{align}
\label{eq:est}
\widehat{\mu}_n = \argmin_{f \in S_{K}^p([0,1])} \left[ \frac{1}{n} \sum_{i=1}^n \frac{1}{m_i} \sum_{j=1}^{m_i} \rho\left(Y_{ij}-f(T_{ij}) \right) + \lambda \int_{0}^1 |f^{(r)}(t)|^2 \dd t \right],
\end{align}
where $\rho:\mathbbm{R} \to \mathbbm{R}_{+}$ is a loss function, e.g., $\rho(x) = x^2$ or $\rho(x) = |x|$, and $\lambda \geq 0$ is the penalty parameter. The roughness penalty is in terms of the integrated squared $r$th derivative of each candidate $f \in S_{K}^p([0,1])$ and is well-defined for $r<p$. This penalty is referred to as the O-spline penalty \citep[see, e.g.][]{Claeskens:2009}. Common choices for $p$ and $r$ are $p=4$ and $r=2$ leading to splines of order $4$ (degree $3$) with second derivative penalization. For $\lambda \to \infty$, the penalty term dominates forcing $\widehat{\mu}_n$ to lie in its null space. In such a situation, $\widehat{\mu}_n$ will be a polynomial of order $r$: smooth but typically far from the  observed data $Y_{ij}$. On the other hand, for $\lambda = 0$ the penalty vanishes, which typically leads to a $\widehat{\mu}_n$ that is close to the $Y_{ij}$ but noticeably less smooth.

Although the asymptotic theory for penalized spline estimators is well developed, the seemingly basic question of finite-sample existence has received little attention.  Proposition~\ref{prop:1} addresses this gap by giving a simple, easily checkable condition on the design points $T_{ij}$ and knots $x_j$ that ensures existence of the estimator for any continuous, coercive loss function. While stated here with functional data in mind, it applies equally well to a wide range of nonparametric estimation problems involving splines. We write  $\mathcal{T} = (T_{11}, \ldots, T_{1m_1}, \ldots, T_{nm_n})$ and  $N = \sum_{i=1}^{n} m_i$ in the statement of Proposition~\ref{prop:1} below.

\begin{prop}
\label{prop:1}
Suppose that $\rho(x)$ is a continuous, nondecreasing, and unbounded function of $|x|$, $p>r \geq 1$, $K +p \leq N$, $\lambda \geq 0$ and there exists a subset of $K+p$ distinct values from $\mathcal{T}$: $T_{s_1} < T_{s_2} < \ldots < T_{s_{K+p}}$ such that
\begin{align*}
B_{j,p}(T_{s_j}) > 0, \quad (j=1, \ldots, K+p).
\end{align*}
Then, the minimizer $\widehat{\mu}_n \in S_{K}^p([0,1])$ exists and can be chosen measurable.
\end{prop}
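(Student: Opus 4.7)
The plan is to reduce existence to continuity plus coercivity on a finite-dimensional Euclidean space, and to derive coercivity from the design condition via the Schoenberg--Whitney theorem.

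First I would parameterize the problem by the B-spline basis: writing $f=\sum_{k=1}^{K+p}\beta_k B_{k,p}$, the criterion in \eqref{eq:est} becomes a continuous function $Q_n:\mathbb{R}^{K+p}\to\mathbb{R}_{+}$. Since $S_K^p([0,1])$ is finite-dimensional and linearly parameterized, existence follows once I show $Q_n(\beta)\to\infty$ as $\|\beta\|\to\infty$. The penalty term alone cannot deliver this, since its kernel consists of all polynomials of degree less than $r$, so coercivity must come from the fidelity term.

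Next I would exploit the design hypothesis. The conditions $T_{s_1}<\cdots<T_{s_{K+p}}$ together with $B_{j,p}(T_{s_j})>0$ are precisely the Schoenberg--Whitney conditions, so by the classical Schoenberg--Whitney theorem (see, e.g., \citet{DB:2000}, Chapter IX) the collocation matrix $M=[B_{k,p}(T_{s_j})]_{j,k=1}^{K+p}$ is nonsingular. Consequently the linear map $\beta\mapsto(f(T_{s_j}))_{j=1}^{K+p}$ is a bicontinuous bijection of $\mathbb{R}^{K+p}$, and $\|\beta\|\to\infty$ forces $\max_{j}|f(T_{s_j})|\to\infty$. Each $T_{s_j}$ equals some design point $T_{i(j),\ell(j)}$, so $Q_n(\beta)$ contains the term $(n m_{i(j)})^{-1}\rho(Y_{i(j),\ell(j)}-f(T_{s_j}))$ with a strictly positive weight independent of $\beta$. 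Since $\rho$ is continuous, nondecreasing and unbounded in $|x|$, this term diverges for the index $j$ where $|f(T_{s_j})|$ explodes, yielding coercivity and hence existence of a minimizer.

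For the measurability claim, the criterion $Q_n(\beta;\mathbf{Y},\mathcal{T})$ is jointly continuous in $\beta$ and the data and, by the coercivity argument above, the argmin correspondence takes nonempty, compact values for every data realization; a measurable selector then exists by the Kuratowski--Ryll-Nardzewski theorem. The main technical hurdle is the invertibility of $M$: deriving it from scratch requires a total-positivity argument for B-splines or a direct determinantal calculation, but invoking Schoenberg--Whitney bypasses this step cleanly, after which continuity, monotonicity of $\rho$, and standard measurable selection combine without further difficulty.
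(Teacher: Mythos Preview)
Your proposal is correct and follows essentially the same route as the paper: invoke the Schoenberg--Whitney theorem to obtain nonsingularity of the B-spline collocation (sub)matrix, deduce coercivity of the objective from the unboundedness of $\rho$, and then appeal to a standard measurable-selection result. The only cosmetic differences are that the paper phrases Schoenberg--Whitney as full column rank of the $N\times(K+p)$ design matrix and uses the Carath\'eodory/normal-integrand framework of \citet{Rock:2009} for measurability where you invoke Kuratowski--Ryll-Nardzewski.
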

\begin{proof}
Write $f = \sum_{j=1}^{K+p} f_j B_{j,p}$. We show that 
\begin{align}
\label{eq:interp}
f(T_i) = 0, \quad \text{for all} \quad (i=1, \ldots, N) \implies f \equiv 0,
\end{align}
which is equivalent to the $N \times (K+p)$-dimensional matrix with elements  $B_{j,p}(T_i)$ having full column rank. 

Now, \eqref{eq:interp}  implies that $f$ interpolates $0$ at all $T_i \in \mathcal{T}$ and, by assumption, there exist $T_{s_1} < T_{s_2} < \ldots < T_{s_{K+p}}$ with the property that $B_{j,p}(T_{s_j}) > 0$. The Schoenberg-Whitney theorem \citep[Chapter XIII, Theorem (2)]{DB:2000} then guarantees the existence of a \textit{unique} $g \in S_{K}^p([0,1])$ satisfying $g(T_{s_1}) = \ldots = g(T_{s_{K+p}})=0$. Since $f$ vanishes at these $T_{s_j}$, it must be that $f \equiv g$, by the uniqueness of $g$. But $0 \in S_{K}^p([0,1])$ and the zero function trivially vanishes at the $T_{s_j}$. Hence $f \equiv g \equiv 0$ and it follows that the $N \times (K+p)$-dimensional matrix with elements $B_{j,p}(T_i)$ has full column rank.

To deduce the existence of a minimizer $\widehat{\mu}_n$ one may now argue as in the proof of Theorem 10.14 in \citep{Mar:2019} using the continuity and coercivity of $\rho$, namely $\rho(x) \to \infty$ as $|x| \to \infty$. The measurability of the minimizer $\widehat{\mu}_n$ follows from the fact that under the continuity of $\rho$ the objective function $L_n: \Omega \times S_{K}^p([0,1])$:
\begin{align*}
L_n(\omega,f) = \frac{1}{n} \sum_{i=1}^n \frac{1}{m_i} \sum_{j=1}^{m_i} \rho\left(Y_{ij}(\omega)-f(T_{ij}(\omega)) \right) + \lambda \int_{0}^1 |f^{(r)}(t)|^2 \dd t,
\end{align*}
is a Carath\'eodory function: $\omega \mapsto L_n(\omega,f)$ is measurable for every $f \in S_{K}^p([0,1])$ and $f \mapsto L_n(\omega,f)$ is continuous for every $\omega \in \Omega$, by the finite-dimensionality of $S_{K}^p([0,1])$. In the terminology of \citep[Example 14.29][]{Rock:2009}, $L_n$ is a \textit{normal integrand}. Hence, by Theorem 14.37 of these authors, the argmin $\widehat{\mu}_n$ is measurable.
\end{proof}

The condition $B_{j,p}(T_{s_j})>0$ for some $T_{s_j}$ requires that for each B-spline there exists a design point in the interior of its support, i.e., $x_j < T_{s_j} < x_{j+p}$. It is a truly minimal condition will be satisfied in all but degenerated situations in which the knots are tightly clustered and the $T_{ij}$ are concentrated within a narrow subregion of $[0,1]$. It should also be noted that Proposition~\ref{prop:1} covers all $\lambda \in [0,\infty)$ and so one does not need to argue differently for $\lambda = 0$ and $\lambda>0$. Its non-asymptotic nature, combined with its broad applicability to both convex and non-convex loss functions, makes it particularly useful for a wide range of practical scenarios.

\section{Asymptotic Theory}
\subsection{Preliminaries}
\label{sec:3}

Our next goal is to understand the asymptotic behaviour of $\widehat{\mu}_n$ and, in particular, how fast it recovers the true mean function $\mu_0$, as $n \to \infty$.  Contrary to the proofs of \citet{Xiao:2020} for the special case $\rho(x) = x^2$, our analysis is complicated by the lack of closed form solutions  under general $\rho$. Hence, we need to argue in terms of the objective function:
\begin{align}
\label{eq:objf}
L_n(f) = \frac{1}{n} \sum_{i=1}^n \frac{1}{m_i} \sum_{j=1}^{m_i} \rho\left(Y_{ij}-f(T_{ij}) \right) + \lambda \int_{0}^1 |f^{(r)}(t)|^2 \dd t, \quad f \in S_{K}^p([0,1]).
\end{align}
To establish a crucial link between $\widehat{\mu}_n$ and $L_n$ we require the following mild assumption on $\rho$.

\begin{enumerate} [label=(A\arabic*), ref=(A\arabic*)]
\item \label{A1} $\rho: \mathbbm{R} \to \mathbbm{R}_{+}$ is convex and absolutely continuous with derivative $\psi = \rho^{\prime}$ existing almost everywhere.
\end{enumerate}
Assumption \ref{A1} ensures that $L_n$, being the sum of two convex functions, is convex on $S_{K}^p([0,1])$. It covers many interesting loss functions, such as $\rho(x)=|x|$, $\rho(x) = x^2$ and the Huber loss $\rho_k(x) = x^21(|x| \leq k) + (2k|x| - k^2)1(|x|>k)$ for $k>0$, offering a compromise between the absolute and square losses.

It is convenient to endow $S_{K}^p([0,1])$ with the penalty-weighted inner product:
\begin{align*}
\langle f, g \rangle_{r,\lambda} = \langle f, g \rangle + \lambda\langle f^{(r)}, g^{(r)} \rangle,
\end{align*}
where $\langle \cdot, \cdot \rangle$ denotes the standard $\mathcal{L}^2([0,1])$ inner product. The associated norm is denoted with $\|\cdot\|_{r,\lambda}$. For $1 \leq r < p$ the inner product $\langle \cdot, \cdot \rangle_{r,\lambda}$ is well-defined on $S_{K}^p([0,1])$ and the associated norm $\|\cdot \|_{r,\lambda}$ is a valid norm: $\|\cdot\|_{r,\lambda}$ is homogeneous, satisfies the triangle inequality and, if $\|g\|_{r,\lambda} = 0$ for some $g \in S_{K}^p([0,1])$, then $g = 0$. Moreover, $S_{K}^p([0,1])$ is finite-dimensional so equipped with $\langle \cdot, \cdot\rangle_{r,\lambda}$ becomes a Hilbert space. But more can be said: all linear operators (including evaluation functionals) on finite-dimensional spaces are bounded, hence $S_{K}^p([0,1])$ is even a reproducing kernel Hilbert space (RKHS). This means that:
\begin{itemize}
\item[(i)] There exists a symmetric function (the reproducing kernel) $\mathcal{R}_{p, r,\lambda}:[0,1]^2 \to \mathbbm{R}$ such that $x \mapsto \mathcal{R}_{p,r,\lambda}(x,y) \in S_{K}^p([0,1])$ for every $y \in [0,1]$.
\item[(ii)] For every $f \in S_{K}^p([0,1])$, $f(x) = \langle f, \mathcal{R}_{p,r,\lambda}(x,\cdot) \rangle_{r,\lambda}$.
\end{itemize}

Lemma~\ref{Lem:RK} provides a characterization of $\mathcal{R}_{p,r,\lambda}$, which will be needed in the proof of our main asymptotic result (Theorem \ref{Thm:1}).

\begin{lemma}
\label{Lem:RK}
Suppose $r<p$ and $\lambda \geq 0$. The following assertions hold:
\begin{enumerate}[label=\arabic*.,ref=\arabic*]
\item \label{itm:rk1}  There exist $\{e_j\}_{j=1}^{K+p} \in S_{K}^p([0,1])$ and $\{\gamma_j\}_{j=1}^{K+p} \geq 0$ with the property that
\begin{align*}
\langle e_i,e_j\rangle = \delta_{ij}, \quad \langle e_i^{(r)},e_j^{(r)}\rangle = \gamma_j \delta_{ij}.
\end{align*}
The functions $\{e_j/(1+\lambda \gamma_j)^{1/2}\}_{j=1}^{K+p} \in S_{K}^p([0,1])$ form an orthonormal basis for $S_{K}^p([0,1])$.
\item \label{itm:rk2} The reproducing kernel $\mathcal{R}_{p,r,\lambda}(x,y)$ may be written as
\begin{align*}
\mathcal{R}_{p,r,\lambda}(x,y) = \sum_{j=1}^{K+p} \frac{e_j(x)e_j(y)}{1+\lambda \gamma_j}.
\end{align*}
\item \label{itm:rk3} For every $f \in \mathcal{L}^2([0,1])$:
\begin{align*}
\int_{0}^1 \int_{0}^1 \mathcal{R}_{p,r,\lambda}(x,y) f(x) f(y) \dd x \dd y \leq \int_{0}^1 |f(x)|^2 \dd x.
\end{align*}
\end{enumerate}
\end{lemma}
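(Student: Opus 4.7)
The three parts of Lemma~\ref{Lem:RK} are really linear-algebra statements on the finite-dimensional space $S_{K}^p([0,1])$ combined with the standard orthonormal-basis representation of a reproducing kernel. My plan is to first produce the functions $\{e_j\}$ by simultaneous diagonalization of two quadratic forms, then read off the kernel from the orthonormal-basis formula, and finally reduce the inequality in item~\ref{itm:rk3} to Bessel's inequality in $\mathcal{L}^2([0,1])$.

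\textbf{Step 1 (simultaneous diagonalization).} Represent any $f \in S_{K}^p([0,1])$ through its B-spline coefficients $\mathbf{f} \in \mathbbm{R}^{K+p}$, and let $G$ and $P$ denote the $(K+p)\times(K+p)$ Gram matrices with entries $G_{ij} = \langle B_{i,p},B_{j,p}\rangle$ and $P_{ij} = \langle B_{i,p}^{(r)},B_{j,p}^{(r)}\rangle$. Since the B-splines are linearly independent, $G$ is symmetric positive definite, and $P$ is symmetric positive semi-definite because $r<p$ makes the $r$-th derivative penalty a genuine quadratic semi-norm on $S_K^p$. Apply the classical generalized eigenvalue decomposition $P\mathbf{u}_j = \gamma_j G \mathbf{u}_j$ with $G$-orthonormal eigenvectors $\mathbf{u}_j$ and $\gamma_j \geq 0$; setting $e_j = \sum_k u_{j,k}B_{k,p}$ gives $\langle e_i,e_j\rangle = \mathbf{u}_i^\top G \mathbf{u}_j = \delta_{ij}$ and $\langle e_i^{(r)},e_j^{(r)}\rangle = \mathbf{u}_i^\top P \mathbf{u}_j = \gamma_j\delta_{ij}$. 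A direct computation then shows
\begin{align*}
\bigl\langle e_i/(1+\lambda\gamma_i)^{1/2}, e_j/(1+\lambda\gamma_j)^{1/2}\bigr\rangle_{r,\lambda}
= \frac{\delta_{ij}+\lambda\gamma_j\delta_{ij}}{\sqrt{(1+\lambda\gamma_i)(1+\lambda\gamma_j)}}
= \delta_{ij},
\end{align*}
so $\{e_j/(1+\lambda\gamma_j)^{1/2}\}$ is an orthonormal basis for $(S_K^p,\langle\cdot,\cdot\rangle_{r,\lambda})$.

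\textbf{Step 2 (the reproducing kernel).} In any finite-dimensional Hilbert space $\mathcal{H}$ of functions with orthonormal basis $\{\phi_j\}$, the reproducing kernel is $\mathcal{R}(x,y) = \sum_j \phi_j(x)\phi_j(y)$, because the right-hand side satisfies $\langle f,\mathcal{R}(x,\cdot)\rangle_{\mathcal{H}} = \sum_j \phi_j(x)\langle f,\phi_j\rangle_{\mathcal{H}} = f(x)$ by the orthonormal expansion. Applying this with $\phi_j = e_j/(1+\lambda\gamma_j)^{1/2}$ yields the stated formula
\begin{align*}
\mathcal{R}_{p,r,\lambda}(x,y) = \sum_{j=1}^{K+p}\frac{e_j(x)e_j(y)}{1+\lambda\gamma_j}.
\end{align*}

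\textbf{Step 3 (the bound in item \ref{itm:rk3}).} For any $f \in \mathcal{L}^2([0,1])$, insert the kernel representation and use Fubini to get
\begin{align*}
\int_0^1\!\!\int_0^1 \mathcal{R}_{p,r,\lambda}(x,y)f(x)f(y)\dd x\dd y
= \sum_{j=1}^{K+p}\frac{\langle f,e_j\rangle^2}{1+\lambda\gamma_j}
\leq \sum_{j=1}^{K+p}\langle f,e_j\rangle^2,
\end{align*}
since $\lambda\gamma_j \geq 0$. Because $\{e_j\}_{j=1}^{K+p}$ is an orthonormal \emph{system} in $\mathcal{L}^2([0,1])$ by Step~1, Bessel's inequality yields $\sum_j \langle f,e_j\rangle^2 \leq \|f\|_{\mathcal{L}^2}^2$, completing the proof.

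\textbf{Expected obstacle.} Nothing here is deep; the only place requiring care is the claim that the $r$-th derivative semi-norm is positive (i.e., $P \succeq 0$) and that $G$ is positive definite on $S_K^p$, which hinge on B-splines being linearly independent and on $r<p$. Once simultaneous diagonalization is in place, parts 2 and 3 are essentially one-line consequences of the orthonormal-basis formula for the kernel and of Bessel's inequality.
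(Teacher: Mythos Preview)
Your proposal is correct and follows essentially the same route as the paper: simultaneous diagonalization of the Gram matrices $G$ and $P$ for Part~\ref{itm:rk1}, the orthonormal-basis representation of the reproducing kernel for Part~\ref{itm:rk2}, and Bessel's inequality in $\mathcal{L}^2([0,1])$ for Part~\ref{itm:rk3}. The only cosmetic difference is that in Part~\ref{itm:rk2} you invoke the general formula $\mathcal{R}(x,y)=\sum_j\phi_j(x)\phi_j(y)$ directly, whereas the paper expands $\mathcal{R}_{p,r,\lambda}(\cdot,y)$ in the basis and identifies the coefficients via the reproducing property---both arguments are equivalent.
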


\begin{proof}

Part \ref{itm:rk1} of the lemma follows from standard simultaneous diagonalization results of a positive definite and a positive semidefinite operator \citep[see, e.g.][Theorem 4.8.1]{Hsing:2015}. As $S_{K}^p([0,1])$ is finite-dimensional, operators on $S_{K}^p([0,1])$ can be identified with matrices. In particular, consider the two $K+ p \times K+p$ matrices consisting of the inner products $\langle B_{i,p}, B_{j,p} \rangle$ and the inner products $\lambda \langle B_{i,p}^{(r)}, B_{j,p}^{(r)} \rangle$, respectively. The former matrix is positive definite, as the B-splines $B_{1,p} \ldots, B_{K+p,p}$ form a basis for $S_{K}^p([0,1])$, and the latter matrix is positive semidefinite. The final assertion of Part \ref{itm:rk1} follows from a linear algebra argument: if $V$ is a finite-dimensional vector space and $W \subset V$ has $\dim V$ linearly independent elements, then $\Span W=V$.

To verify Part \ref{itm:rk2}, recall that $x \mapsto \mathcal{R}_{p,r,\lambda}(x,y) \in S_{K}^p([0,1])$ for every $y \in [0,1]$. Hence, applying the result of Part \ref{itm:rk1}, there exist real-valued coefficients depending on $y$, $\{a_j(y)\}_{j=1}^{K+p}$, such that
\begin{align*}
\mathcal{R}_{p,r,\lambda}(x,y) = \sum_{j=1}^{K+p} a_j(y) \frac{e_j(x)}{(1+\lambda \gamma_j)^{1/2}}.
\end{align*}
Taking inner products of $\mathcal{R}_{p,r,\lambda}(x,y)$ with each of $\{e_j\}_{j=1}^{K+p}$, using the reproducing property (ii), the orthogonality of $\{e_j\}_{j=1}^{K+p}$ and $\langle e_j, e_j \rangle_{r,\lambda} = (1+\lambda \gamma_j)$ we obtain
\begin{align*}
e_j(y) = \langle \mathcal{R}_{p,r,\lambda}(y,\cdot), e_j \rangle_{r,\lambda} = \frac{a_j(y)}{(1+\lambda \gamma_j)^{1/2}} \langle e_j, e_j \rangle_{r,\lambda}  = a_j(y)(1+\lambda \gamma_j)^{1/2}.
\end{align*}
Hence $a_j(y) = e_j(y)/(1+\lambda \gamma_j)^{1/2}$ and Part \ref{itm:rk2} follows.

Finally, to establish Part \ref{itm:rk3} use the representation of $\mathcal{R}_{p,r,\lambda}$ obtained in Part \ref{itm:rk2} to see that
\begin{align*}
\int_{0}^1 \int_{0}^1 \mathcal{R}_{p,r,\lambda}(x,y) f(x) f(y) \dd x \dd y  & = \sum_{j=1}^{K+p} \frac{|\langle f, e_j \rangle|^2}{1+\lambda \gamma_j}
\\ & \leq \sum_{j=1}^{K+p} |\langle f, e_j \rangle|^2, && (\lambda \gamma_j \geq 0),
\\ & \leq \|f\|^2, && \text{(Bessel's inequality in $\mathcal{L}^2([0,1])$)}.
\end{align*}
\end{proof}

We need to use the objective function $L_n$ in order to argue that $\widehat{\mu}_n$ converges to $\mu_0$, but $L_n$ is defined on the spline subspace $S_{K}^p([0,1])$ and, in general, $\mu_0 \notin S_{K}^p([0,1])$. Nevertheless, if $\mu_0 \in \mathcal{C}^j([0,1])$ and $p>j$ there exists a spline approximation, $s_{\mu_0} \in S_{K}^p([0,1])$, which improves as $K \to \infty$ \citep[see, e.g.][Theorem XII (6)]{DB:2000}. We shall use this abstract spline approximation as a means of assessing proximity of $\widehat{\mu}_n$ to $\mu_0$.

\begin{lemma}
\label{Lem:1}
Assume that assumption \ref{A1} is satisfied. Then, for every sequence $\{C_n\}$ and  $D >0$ it holds that
\begin{align*}
\mathbb{P}\left( \|\widehat{\mu}_n-s_{\mu_0}\|_{r,\lambda}< D C_n^{1/2}  \right) \geq \mathbb{P}\left( \inf_{f \in S_{K}^p([0,1]): \|f\|_{r,\lambda} = D} L_n(s_{\mu_0}+C_n^{1/2}f) > L_n(s_{\mu_0}) \right).
\end{align*}
\end{lemma}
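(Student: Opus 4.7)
The plan is to establish the statement via a standard convexity/localization argument: I will prove the contrapositive pointwise (for each $\omega$ in the sample space), showing that whenever $\inf_{\|f\|_{r,\lambda}=D} L_n(s_{\mu_0}+C_n^{1/2}f) > L_n(s_{\mu_0})$, the minimizer $\widehat{\mu}_n$ must satisfy $\|\widehat{\mu}_n - s_{\mu_0}\|_{r,\lambda} < D C_n^{1/2}$. Monotonicity of $\mathbb{P}$ then yields the stated inequality. Existence and measurability of $\widehat{\mu}_n$ are provided by Proposition~\ref{prop:1}, and assumption \ref{A1} supplies convexity of $L_n$, which is the only analytic input needed.

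The key computation is as follows. Suppose, for contradiction, that $\|\widehat{\mu}_n - s_{\mu_0}\|_{r,\lambda} \geq D C_n^{1/2}$. Set $\alpha = \|\widehat{\mu}_n - s_{\mu_0}\|_{r,\lambda}/D$, so $\alpha \geq C_n^{1/2}$, and let $h = (\widehat{\mu}_n - s_{\mu_0})/\alpha \in S_{K}^p([0,1])$, which satisfies $\|h\|_{r,\lambda} = D$. Define the interpolant
\begin{equation*}
\widetilde{f} := s_{\mu_0} + C_n^{1/2} h = \tau \widehat{\mu}_n + (1-\tau) s_{\mu_0},\qquad \tau := C_n^{1/2}/\alpha \in (0,1].
\end{equation*}
Then $\widetilde{f}$ lies on the sphere $\{s_{\mu_0} + C_n^{1/2} f: \|f\|_{r,\lambda} = D\}$. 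Convexity of $L_n$ (which follows from convexity of $\rho$ and of the quadratic penalty, by \ref{A1}) together with the defining property $L_n(\widehat{\mu}_n) \leq L_n(s_{\mu_0})$ give
\begin{equation*}
L_n(\widetilde{f}) \leq \tau L_n(\widehat{\mu}_n) + (1-\tau) L_n(s_{\mu_0}) \leq L_n(s_{\mu_0}).
\end{equation*}
Consequently $\inf_{\|f\|_{r,\lambda} = D} L_n(s_{\mu_0} + C_n^{1/2} f) \leq L_n(\widetilde{f}) \leq L_n(s_{\mu_0})$, contradicting the hypothesis on the event under consideration.

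Taking contrapositives shows that the event on the right-hand side of the displayed inequality is contained in the event on the left-hand side, and the lemma follows by monotonicity of probability. There is no real obstacle here beyond the careful bookkeeping of $\alpha$ and $\tau$ when $\|\widehat{\mu}_n - s_{\mu_0}\|_{r,\lambda}$ equals $D C_n^{1/2}$ on the boundary; the strict inequality in the conclusion is preserved because the hypothesis is a strict inequality. The argument uses nothing about the specific structure of the spline subspace beyond convexity and the existence of a minimizer, so the lemma is really a general convex-analytic principle adapted to the present Hilbert space $(S_{K}^p([0,1]), \langle\cdot,\cdot\rangle_{r,\lambda})$.
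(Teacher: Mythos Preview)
Your proof is correct and uses essentially the same convexity-and-line-segment idea as the paper. The paper's version is slightly more elaborate---it first locates a minimizer in the closed ball via compactness and shows it lies in the interior, then separately rules out the exterior by a chord argument---whereas you handle boundary and exterior together by drawing the segment from $s_{\mu_0}$ to $\widehat{\mu}_n$ and invoking $L_n(\widehat{\mu}_n)\le L_n(s_{\mu_0})$ directly; the underlying argument is the same.
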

\begin{proof}

Denote the event
\begin{align*}
A := \{ \inf_{f \in S_{K}^p([0,1]): \|f\|_{r,\lambda} = D} L_n(s_{\mu_0}+C_n^{1/2}f) > L_n(s_{\mu_0})\},
\end{align*}
and define the open ball
\begin{align*}
O := \left\{ f \in S_K^p([0,1]) : \|f - s_{\mu_0}\|_{r,\lambda} < D C_n^{1/2} \right\}.
\end{align*}
It will be shown that on  $A$, every global minimizer of \( L_n \) lies in \( O \). Since $ \widehat{\mu}_n$ is a global minimizer, this will imply \( A \subset \{ \|\widehat{\mu}_n - s_{\mu_0} \|_{r,\lambda} < D C_n^{1/2} \} \), and the desired inequality would follow by the monotonicity of $\mathbb{P}$ relative to set containment.

Consider first the closed ball $\bar{O}=\{f \in S_{K}^p([0,1]):\|f-s_{\mu_0}\|_{r,\lambda} \leq DC_n^{1/2}\}$. Since $S_{K}^p([0,1])$ is finite-dimensional, $\bar{O}$ is compact. Moreover, the finite dimensionality of $S_{K}^p([0,1])$ and the convexity of $L_n$ implied by assumption \ref{A1}, jointly imply that $L_n$ is also continuous with respect to convergence in $\|\cdot\|_{r,\lambda}$. By the extreme value theorem, there exists a minimizer of $L_n$ in $\bar{O}$, $f_{\bar{O}}$. This $f_{\bar{O}}$ may be written as
\begin{align*}
f_{\bar{O}} = s_{\mu_0}+ r C_n^{1/2} f,
\end{align*}
for some $r \in [0, 1]$ and some $f \in S_{K}^p([0,1])$ satisfying $\|f\|_{r,\lambda}=D$. It must be that $r<1$ for, if that were not true, we would have 
\begin{align*}
L_n(f_{\bar{O}}) = L_n(s_{\mu_0}+ C_n^{1/2} f) > L_n(s_{\mu_0}),
\end{align*}
contradicting the definition of the minimizer: $L_n(f_{\bar{O}}) \leq L_n(f)$ for every $f \in \bar{O}$. It follows that $r<1$ and so $f_{\bar{O}} \in O \subset \bar{O}$.

We now claim that for every $g \in S_{K}^p([0,1]) \setminus \bar{O}$ we have $L_n(g)>L_n(s_{\mu_0})$. Indeed, consider the linear segment joining $g$ and $s_{\mu_0}$:
\begin{align*}
w(t) = (1-t)g + t s_{\mu_0}, \quad t \in [0,1].
\end{align*}
We have $w(0) = g$ and $w(1) = s_{\mu_0}$ and since $\|g - s_{\mu_0}\|_{r,\lambda} > D C_n^{1/2}$, the intermediate value theorem guarantees the existence of a $ t_0 \in (0,1)$ such that $\|w(t_0)-s_{\mu_0}\|_{r,\lambda}=DC_n^{1/2}$. Thus, $w(t_0) = s_{\mu_0} + C_n^{1/2} f_0$ for some $f_0 \in S_{K}^p([0,1])$ satisfying $\|f_0\|_{r,\lambda}=D$. On the event $A$ we have $L_n(s_{\mu_0})<L_n(s_{\mu_0} + C_n^{1/2} f_0)$, hence, by the convexity of $L_n$,
\begin{align*}
L_n(s_{\mu_0})<L_n(s_{\mu_0} + C_n^{1/2} f_0)=L_n(w(t_0)) \leq (1-t_0)L_n(g) + t_0 L_n(s_{\mu_0}), 
\end{align*}
or, equivalently, after collecting terms and dividing by $(1-t_0)>0$, $L_n(s_{\mu_0})<L_n(g)$. Thus, any \( g \notin \bar{O} \) cannot be a global minimizer.

Combining the above observations, we have shown that on the event $A$ every global minimizer must lie inside the open ball $O$. In particular, $\widehat{\mu}_n \in O$ and hence
\begin{align*}
\mathbb{P}\left( \|\widehat{\mu}_n-s_{\mu_0}\|_{r,\lambda}< D C_n^{1/2}  \right) \geq \mathbb{P}\left( \inf_{f \in S_{K}^p([0,1]): \|f\|_{r,\lambda} = D} L_n(s_{\mu_0}+C_n^{1/2}f) > L_n(s_{\mu_0}) \right),
\end{align*}
as claimed.
\end{proof}

Like Proposition~\ref{prop:1}, Lemma~\ref{Lem:1} is non-asymptotic and holds for every $n \in \mathbbm{N}$. To illustrate its usefulness, take a sequence $C_n \to 0$. If for every $\epsilon>0$ there exists a $D = D(\epsilon)$ such that
\begin{align}
\label{eq:plim}
\liminf_{n \to \infty} \mathbb{P}\left( \inf_{f \in S_{K}^p([0,1]): \|f\|_{r,\lambda} = D} L_n(s_{\mu_0}+C_n^{1/2}f) > L_n(s_{\mu_0}) \right) \geq 1-\epsilon,
\end{align}
Lemma~\ref{Lem:1} guarantees $\|\widehat{\mu}_n-s_{\mu_0}\| \leq \|\widehat{\mu}_n-s_{\mu_0}\|_{r,\lambda} =O_{\mathbb{P}}(C_n^{1/2})$. The triangle inequality then gives
\begin{align*}
\|\widehat{\mu}_n-\mu_0\| \leq O_{\mathbb{P}}(C_n^{1/2}) + \|s_{\mu_0}-\mu_0\|.
\end{align*}
The second term on the right is the deterministic spline approximation error, which, if $\mu_0 \in \mathcal{C}^j([0,1])$ and under typical assumptions, decays at the rate $O(K^{-j})$. This leads to $\|\widehat{\mu}_n-\mu_0\| = O_{\mathbb{P}}(C_n^{1/2}+K^{-j})$ and so, if $K \to \infty$, $\widehat{\mu}_n$ is a consistent estimator with rate of convergence $O_{\mathbb{P}}(C_n^{1/2}+K^{-j})$.

\subsection{Main Result}

To obtain the best possible rate of convergence we introduce and discuss the following set of assumptions. To lighten the notation, we adopt the abbreviation
\begin{align*}
\epsilon_{i}(t) := \left(X_i(t) - \mu_0(t)\right)+\zeta_i(t), \quad (i=1, \ldots, n),
\end{align*}
and write $\epsilon_{ij} = \epsilon_i(T_{ij})$. We call $\epsilon_i(\omega,\cdot ) = (X_i(\omega,\cdot) - \mu_0(\cdot))+\zeta_i(\omega,\cdot)$ the \textit{error process}. 

\begin{enumerate} [label=(A\arabic*), ref=(A\arabic*)]
\setcounter{enumi}{1}
\item \label{A2} Let $h_i = x_i-x_{i-1}$ with $x_i$ the knots and $h =\max_i h_i$. Then, $\max_i |h_{i+1}-h_i|= o(K^{-1})$ and there exists an $M>0$ such that $h/\min_i h_i \leq M$. 
\item \label{A3} The error processes $\epsilon_i(\omega, t) : (\Omega, \mathcal{A}, \mathbb{P}) \times [0,1] \to \mathbbm{R}$, $i =1, \ldots, n,$ are independent and identically distributed.
\item \label{A4}  The discretization variables $T_{ij}$ are independent and identically distributed with density $q$ satisfying  $0<c \leq q(t) \leq C < \infty$ for some $0<c\leq C<\infty$ for all $t \in [0,1]$. Moreover, the $T_{ij}$ are independent of the error processes $\epsilon_{i}$. 
\item \label{A5} There exist finite constants $\kappa$ and $M_1$ such that for all $|y| < \kappa$,
\begin{align*}
\left|\psi(x+y)-\psi(x) \right| \leq M_1, \quad \forall x \in \mathbbm{R}.
\end{align*}
\item \label{A6} For $u \to 0$, it holds that
\begin{align*}
\sup_{t \in [0,1]}\mathbb{E} [\left|\psi(\epsilon_1(t)+u)-\psi(\epsilon_1(t))\right|^2 ] = O(|u|).
\end{align*}
\item \label{A7} $\mathbb{E}[\psi(\epsilon_{1}(t))] =0$ for all $t \in [0,1]$, \ $\sup_{t \in [0,1]} \mathbb{E}[|\psi(\epsilon_{1}(t)|^2] < \infty$ and there exists a function $\delta: [0,1] \to \mathbbm{R}_{+}$ satisfying  $0< \inf_{t \in [0,1]} \delta(t) \leq \sup_{t \in [0,1]}  \delta(t) <\infty$ and
\begin{align*}
\sup_{t \in [0,1]}\left| \mathbb{E}[\psi(\epsilon_{1}(t)+u)]-\delta(t) u \right| = o(u),
\end{align*}
as $u \to 0$.
\end{enumerate}

Assumption \ref{A2} is a knot quasi-uniformity condition, which has been extensively used in the asymptotics of lower rank spline estimators at least since \citep{Shen:1998}. Assumption \ref{A3} ensures that we have $n$ independent units and is a considerable generalization of previous assumptions in the literature requiring also independence between the $X_i$ and the $\zeta_i$, see, e.g., the statement of Theorem 3.2 in \citep{Cai:2011} or Assumption 6 in \citep{Xiao:2020}. Assumption \ref{A4} reflects the longitudinal nature of the data: each subject is observed repeatedly at random times drawn independently from a common density $q$ on $[0,1]$ that is bounded away from zero and infinity; this means that all subregions are adequately sampled while its independence from the $\epsilon_i$ ensures that there is no informative sampling. Assumption \ref{A5} is very mild and is satisfied by the derivatives or subgradients of all commonly used convex loss functions.

Assumptions \ref{A6} and \ref{A7} are more technical, but ensure that our results cover both smooth and non-smooth loss functions. Specifically,
assumption \ref{A6} is a mean-square Lipschitz continuity condition while assumption \ref{A7} is a Fisher-consistency condition ensuring that we are indeed estimating the mean function $\mu_0$ instead of another location function. If $\psi$ is Lipschitz, e.g., $\psi(x) = x$ or $\psi_k(x) = \max(-k,\min(x,k))$ for some $k>0$, then \ref{A6} is satisfied with the stronger $O(|u|^2)$ on the right side. If $\psi$ is smooth, $0<\mathbb{E}[\psi^{\prime}(\epsilon_1(t))] <\infty$ uniformly in $t \in [0,1]$ and $\|\psi^{\prime \prime}\|_{\infty}<\infty$, then  we can take $\delta(t) = \mathbb{E} [\psi^{\prime}(\epsilon_1(t))]$ ensuring the validity of the second part of \ref{A7}. For $\psi(x) = x$, we can simply take $\delta(t) := 1$. 

For less smooth loss functions assumptions \ref{A6} and \ref{A7} typically require additional conditions on the distribution of the error process $\epsilon_1(t)$. These assumptions can still be verified on a case-to-case basis, as we now demonstrate by means of a prominent example. Consider the loss function $\rho(x) = |x|$ with subgradient $\psi(x) = \sign(x)$ for $x \neq 0$ and $ \psi(0) = 0$.  If at each $t \in [0,1]$, $\epsilon_1(t)$ has CDF $F_{t}$ with positive and bounded Lebesgue density $f_{t}$ in a neighbourhood about zero, for $u \to 0$ we find
\begin{align*}
\sup_{t \in [0,1]}\mathbb{E} [\left|\psi(\epsilon_1(t)+u)-\psi(\epsilon_1(t))\right|^2 ] \leq 4 |u| (\sup_{t \in [0,1]} \sup_{|u|\leq \delta} f_{t}(u)),
\end{align*}
for every small $\delta>0$. If there exists a $\delta>0$ for which $\sup_{t \in [0,1]} \sup_{|u|\leq \delta} f_{t}(u) < \infty$, the right side is $O(|u|)$ and \ref{A6} holds. Furthermore, if $F_t(0) = 1/2$, that is, $\epsilon_1(t)$ has median at zero, then choosing $\delta(t) = 2 f_{t}(0)$, condition \ref{A7} is fulfilled provided that
\begin{align*}
\lim_{u \to 0} \sup_{t \in [0,1]} \left|\frac{F_{t}(u)-F_{t}(0)}{u} - f_{t}(0)  \right| = 0.
\end{align*}
That is, the condition is fulfilled if the family of CDFs $\{F_{t}, t \in [0,1]\}$ is uniformly first order differentiable at $0$ and $f_t(0)$, the derivatives (or densities) at $0$, are uniformly bounded away from $0$ and infinity. Thus, heavy tails are permitted. Conditions \ref{A6} and \ref{A7} can be verified in a likewise manner for many other loss functions.

Theorem~\ref{Thm:1} below contains our main result. In its statement we use $m$ to denote the harmonic mean of the $m_i$, that is, $m = n/(\sum_{i=1}^n m_i^{-1})$; this may be viewed a measure of density of the sampling points $T_{ij}$ within $[0,1]$. We sketch the proof and defer the technical lemmas to the Appendix at the end of the paper.

\begin{theorem}
\label{Thm:1}
Assume that assumptions \ref{A1}--\ref{A7} hold and that $\lambda \to 0$ and $K \to \infty$ in such a way that $n^{\delta-1}K^3 \to 0$ for some $\delta>0$, $K \min\{\lambda, \lambda^2 K^{2r}\} \to 0$ and $K (nm)^{-1} \min(\lambda^{-1/(2r)}, K) \to 0$ as $n \to \infty$. Then, if $\mu \in \mathcal{C}^{j}([0,1])$ with $r \leq j < p$ and $C_n := n^{-1} + (nm)^{-1} \min(\lambda^{-1/(2r)}, K) + K^{-2j} + \min(\lambda, \lambda^2 K^{2r})$, for every $\epsilon>0$ there exists a $D = D(\epsilon)$ such that
\begin{align*}
\liminf_{n \to \infty} \mathbb{P}\left( \inf_{f \in S_{K}^p([0,1]): \|f\|_{r,\lambda} = D} L_n(s_{\mu_0}+C_n^{1/2}f) > L_n(s_{\mu_0}) \right) \geq 1-\epsilon.
\end{align*}
\end{theorem}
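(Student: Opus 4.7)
The plan is to expand $L_n(s_{\mu_0} + C_n^{1/2}f)-L_n(s_{\mu_0})$ around $s_{\mu_0}$ and show that the deterministic curvature produced by the data together with the penalty dominates all stochastic and bias fluctuations on the sphere $\{\|f\|_{r,\lambda}=D\}$ for $D$ sufficiently large. Writing $b := s_{\mu_0}-\mu_0$ and $r_{ij} := Y_{ij}-s_{\mu_0}(T_{ij}) = \epsilon_{ij}-b(T_{ij})$, convexity of $\rho$ from \ref{A1} yields the Knight-type identity
\begin{align*}
\rho(r_{ij}-u)-\rho(r_{ij}) = -u\,\psi(r_{ij}) + \int_0^{u}\bigl[\psi(r_{ij})-\psi(r_{ij}-s)\bigr]\,\mathrm{d}s,
\end{align*}
applied with $u=C_n^{1/2}f(T_{ij})$. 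Together with the penalty increment $2C_n^{1/2}\lambda\langle s_{\mu_0}^{(r)},f^{(r)}\rangle+C_n\lambda\|f^{(r)}\|^2$, this decomposes the target into a linear score term, a quadratic remainder, and two penalty pieces.

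I would then control the linear score. By \ref{A7}, $\mathbb{E}[\psi(r_{ij})\mid T_{ij}] = -\delta(T_{ij})b(T_{ij})+o(|b(T_{ij})|)$, so the conditional-mean component contributes at most a constant times $\|b\|_\infty\|f\|\lesssim K^{-j}\|f\|$ by standard spline approximation under $\mu_0\in\mathcal{C}^j([0,1])$. The centered piece $\bar S_n(f) := n^{-1}\sum_i m_i^{-1}\sum_j f(T_{ij})\{\psi(r_{ij})-\mathbb{E}[\psi(r_{ij})\mid T_{ij}]\}$ is a mean-zero linear functional of $f$; writing $f(T_{ij})=\langle f,\mathcal{R}_{p,r,\lambda}(T_{ij},\cdot)\rangle_{r,\lambda}$ and bounding the diagonal $\sup_x\mathcal{R}_{p,r,\lambda}(x,x)=O(\min(\lambda^{-1/(2r)},K))$ via the spectral representation of Lemma \ref{Lem:RK} part \ref{itm:rk2}, a variance computation that distinguishes within-subject variation (which produces $(nm)^{-1}\min(\lambda^{-1/(2r)},K)$) from subject-level variation (which produces $n^{-1}$) yields $\sup_{\|f\|_{r,\lambda}=D}|\bar S_n(f)|=O_{\mathbb{P}}(C_n^{1/2})\,D$, so the linear score contributes at most $O_{\mathbb{P}}(C_n)\,D$ to the target.

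For the quadratic remainder, a second application of \ref{A7} shows its conditional mean equals
\begin{align*}
\frac{C_n}{2n}\sum_{i}\frac{1}{m_i}\sum_{j}\delta(T_{ij})\,f(T_{ij})^2 + o_{\mathbb{P}}(C_n D^2),
\end{align*}
while \ref{A6} bounds the variance of the stochastic fluctuation uniformly over the sphere. Under \ref{A4} and the knot quasi-uniformity \ref{A2}, using $n^{\delta-1}K^3\to 0$ to convert empirical norms to $\mathcal{L}^2$ norms uniformly over the finite-dimensional space $S_K^p([0,1])$ (with Lemma \ref{Lem:RK} part \ref{itm:rk3} underpinning the covering argument), this quadratic is bounded below by $c\,C_n\|f\|^2$ for some $c>0$ with probability tending to one. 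Adding the penalty contribution $C_n\lambda\|f^{(r)}\|^2$ and absorbing the cross term $2C_n^{1/2}\lambda\langle s_{\mu_0}^{(r)},f^{(r)}\rangle$ by Cauchy--Schwarz (where $\lambda\|s_{\mu_0}^{(r)}\|^2=O(\min(\lambda,\lambda^2 K^{2r}))$ is a standard interpolation estimate under $\mu_0\in\mathcal{C}^j$ with $r\leq j<p$), I arrive at
\begin{align*}
L_n(s_{\mu_0}+C_n^{1/2}f)-L_n(s_{\mu_0}) \;\geq\; c\,C_n D^2 - O_{\mathbb{P}}(C_n)\,D,
\end{align*}
which is strictly positive on $\{\|f\|_{r,\lambda}=D\}$ once $D$ is large, delivering the claim.

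The main obstacle is obtaining the uniform bound on $\bar S_n(f)$ with the sharp factor $\min(\lambda^{-1/(2r)},K)$: generic evaluation-functional bounds in the RKHS give only $K$, whereas the sharper $\lambda^{-1/(2r)}$ in the large-$\lambda$ regime requires the precise spectral analysis of the simultaneously-diagonalized pair $\{e_j\}$, $\{\gamma_j\}$ from Lemma \ref{Lem:RK} part \ref{itm:rk1}, combined with knot quasi-uniformity \ref{A2} to pin down $\gamma_j\asymp j^{2r}$. This is exactly what produces the critical breakpoint separating the small-- and large--knot regimes referred to in the abstract; all seven assumptions \ref{A1}--\ref{A7} enter at one point or another, with \ref{A5} providing the crude uniform boundedness of $\psi$ increments needed to make the variance computations for the remainder rigorous.
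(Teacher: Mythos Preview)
Your overall architecture matches the paper's: decompose $L_n(s_{\mu_0}+C_n^{1/2}f)-L_n(s_{\mu_0})$ into a linear score, a nonlinear (quadratic) remainder, and two penalty pieces, then show curvature dominates. Your handling of the linear score $\bar S_n(f)$ via the reproducing property and the diagonal/off-diagonal variance split is exactly the paper's Lemma~\ref{Lem:I2}. A minor organizational difference is that you center at $\psi(r_{ij})=\psi(\epsilon_{ij}-b(T_{ij}))$ whereas the paper centers at $\psi(\epsilon_{ij})$ and absorbs the approximation bias into the limits of the integral remainder $\int_{R_{ij}}^{R_{ij}+C_n^{1/2}f(T_{ij})}\{\psi(\epsilon_{ij}+u)-\psi(\epsilon_{ij})\}\,\mathrm{d}u$; this is cosmetic and your version works too under \ref{A5}--\ref{A7}.

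There is, however, a genuine gap in how you treat the \emph{uniform} control of the stochastic part of the quadratic remainder (the paper's $I_2$). Saying that ``\ref{A6} bounds the variance of the stochastic fluctuation uniformly over the sphere'' is not enough: a variance bound for each fixed $f$ does not by itself give $\sup_{\|f\|_{r,\lambda}\le D}|I_2(f)|=o_{\mathbb{P}}(C_n)$, because the remainder is nonlinear in $f$ and cannot be reduced to a single random element of $S_K^p$ via the reproducing trick. The paper supplies this uniformity through an explicit bracketing-entropy argument: a pointwise variance bound of order $C_n^{3/2}K^{1/2}$ (from \ref{A6}), a Lipschitz bound $|Z_i(f)-Z_i(f')|\le c\,C_n^{1/2}K^{1/2}\|f-f'\|$ (from \ref{A5}), the finite-dimensional entropy $\mathcal{H}(\delta,\mathcal{B}_D,\|\cdot\|)\le (K+p)\log(4D/\delta+1)$, and then Theorem~8.13 of \citet{VDG:2000}. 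The rate condition $n^{\delta-1}K^3\to 0$ enters precisely here, in the entropy integral; it is not used, as you suggest, ``to convert empirical norms to $\mathcal{L}^2$ norms'' for the curvature term. In the paper the curvature $I_1$ is handled \emph{deterministically} by taking the full expectation (over both $\epsilon$ and $\mathcal{T}$), which directly yields $c_0 C_n\|f\|^2$ via \ref{A4} and \ref{A7} without any uniform LLN.

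You also misplace Lemma~\ref{Lem:RK} part~\ref{itm:rk3}: it has nothing to do with the covering argument. Its role is in the linear score variance, where the off-diagonal (between-$j$, within-$i$) contribution is $\iint\mathcal{R}_{p,r,\lambda}(x,y)\,\mathbb{E}[\psi(\epsilon_1(x))\psi(\epsilon_1(y))]\,\mathrm{d}x\,\mathrm{d}y$, and part~\ref{itm:rk3} bounds this by $\mathbb{E}\|\psi(\epsilon_1)\|^2$, producing the $n^{-1}$ term in $C_n$.
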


\begin{proof}

Write for convenience $R_{ij} = \mu_0(T_{ij}) - s_{\mu_0}(T_{ij})$. By \ref{A1}, $\rho$ is absolutely continuous, hence we may decompose
\begin{align*}
L_n(s_{\mu_0}+C_n^{1/2}f) - L_n(s_{\mu_0}) =   I_1(f) + I_2(f) + I_3(f) + I_4(f),
\end{align*}
with
\begin{align*}
I_1(f) & = \frac{1}{n} \sum_{i=1}^n \frac{1}{m_i} \sum_{j=1}^{m_i} \mathbb{E}\left[ \int_{R_{ij}}^{R_{ij}+C_n^{1/2}f(T_{ij})} \{\psi\left(\epsilon_{ij}+u\right) - \psi(\epsilon_{ij})   \}\dd u \right] + \lambda C_n \left\|f^{(r)} \right\|^2,
\\  I_2(f) & = \frac{1}{n} \sum_{i=1}^n \frac{1}{m_i} \sum_{j=1}^{m_i} \int_{R_{ij}}^{R_{ij}+C_n^{1/2}f(T_{ij})}  \{\psi\left(\epsilon_{ij}+u\right) - \psi(\epsilon_{ij}) \}\dd u
\\ & \quad - \frac{1}{n} \sum_{i=1}^n \frac{1}{m_i} \sum_{j=1}^{m_i} \mathbb{E} \left[ \int_{R_{ij}}^{R_{ij}+C_n^{1/2}f(T_{ij})}  \{\psi\left(\epsilon_{ij}+u\right) - \psi(\epsilon_{ij}) \}\dd u\right] \\
I_3(f) & =  \frac{C_n^{1/2}}{n} \sum_{i=1}^n \frac{1}{m_i} \sum_{j=1}^{m_i} \psi(\epsilon_{ij}) f(T_{ij}) \\
 I_4(f) & =  2 \lambda C_n^{1/2} \langle f^{(r)}, s_{\mu_0}^{(r)} \rangle.
\end{align*}

To complete the proof we will show the following:
\begin{align}
\inf_{f \in S_{K}^p([0,1]):\|f\|_{r,\lambda} = D} I_1(f) & \geq c_0 D^2 C_n+ O(1) D C_n \label{asser:1} \\
\sup_{f \in S_{K}^p([0,1]):\|f\|_{r,\lambda} \leq D} |I_2(f)| &= o_{\mathbb{P}}(1) C_n \label{asser:2} \\
\sup_{f \in S_{K}^p([0,1]):\|f\|_{r,\lambda} \leq D} |I_3(f)| &= O_{\mathbb{P}}(1) D C_n \label{asser:3} \\ 
\sup_{f \in S_{K}^p([0,1]):\|f\|_{r,\lambda} \leq D} |I_4(f)| &= O(1) D C_n \label{asser:4},
\end{align}
for a strictly positive constant $c_0$. From \eqref{asser:1}--\eqref{asser:4} it follows that 
\begin{align*}
\inf_{f \in S_{K}^p([0,1]): \|f\|_{r, \lambda} = D} L_n(s_{\mu_0}+C_n^{1/2}f) - L_n(s_{\mu_0}) \geq c_0 D^2 C_n + O_{\mathbb{P}}(1) D C_n + o_{\mathbb{P}}(1) C_n.
\end{align*}
For every $\epsilon>0$ we can now find a sufficiently large $D > 1$ (so that $D^2 >D$) ensuring that the right side will be strictly positive with probability at least $1-\epsilon$ for all large $n$. Assertion \eqref{asser:4} may be established exactly as in the proof of Theorem~2 in \citep{Kal:2023} and its proof is omitted. The other assertions are proven in Lemmas \ref{Lem:I1}, \ref{Lem:I2} and \ref{Lem:I4}  in the Appendix.
\end{proof}

The conditions of Theorem \ref{Thm:1} require that $K$ increases with the sample size, but at a controlled rate. The condition $n^{\delta-1} K^3 \to 0$ can be replaced with $n^{\delta-1}K^2$ for some $\delta>0$ if $\psi$ is Lipschitz. Thus, for smoother loss functions, the number of knots can grow faster albeit still at a controlled rate. Such restrictions are in line with the main motivation for penalized spline estimators: a sufficiently but not excessively rich spline basis can capture most underlying patterns.  

 The optimal sequence $C_n$ may be derived by examining (i) the variance of the linearized estimator, which is captured by $I_3$, and (ii) the penalization bias, which is captured by $I_4$. Contrary to smoothing spline type full rank estimators, the variance and penalization bias herein are more intricate depending jointly on $K$ and $\lambda$. The other terms of our decomposition, $I_1$ and $I_2$, capture on the one hand the identifiability of the estimator and on the other the non-linear remainder of our expansion, which we show to be asymptotically negligible with an empirical process argument. For $\psi(x) = x$ the non-linear remainder vanishes identically and our proof greatly simplifies. In this case, we recover the result in \citep[Theorem 3.2]{Xiao:2020}, although our proof, using elements of functional instead of matrix analysis, remains notably different.

For this specific $C_n$,  $K^{-2j} \leq C_n$, so Theorem~\ref{Thm:1}, via Lemma~\ref{Lem:1}, implies
\begin{align}
\label{eq:rate}
\int_{0}^1 |\widehat{\mu}_n(t) -\mu_0(t)|^2 \dd t = O_{\mathbb{P}}\left(  n^{-1} + (nm)^{-1} \min(\lambda^{-1/(2r)}, K) + K^{-2j} + \min(\lambda, \lambda^2 K^{2r})\right).
\end{align}
This rather interesting rate of convergence sheds light on the interplay between the number of knots $K$, the penalty parameter $\lambda$ and the discretization of the curves, as captured by the harmonic mean $m$. To start with, for $K \geq \lambda^{-1/(2r)}$ the rate of convergence becomes
\begin{align*}
\int_{0}^1 |\widehat{\mu}_n(t) -\mu_0(t)|^2 \dd t = O_{\mathbb{P}}\left(  n^{-1} + (nm \lambda^{1/(2r)})^{-1} + K^{-2j} + \lambda \right),
\end{align*}
which is referred to as "\textit{a large knot scenario}". It is interesting to note that up to the spline approximation error $K^{-2j}$ is the rate of convergence of the smoothing spline estimators of \citep{Cai:2011} and \citep{Kal:2023b}. But while these estimators require knots at each unique $T_{ij}$, the penalized spline estimator only requires $K$ knots. To appreciate the difference in  magnitude, note that if all the $T_{ij}$ are distinct (which will be the case under assumption \ref{A4}) smoothing spline estimators are $\sum_{i=1}^{n}m_i$-dimensional and since $K<< \sum_{i=1}^{n}m_i$, penalized spline estimators will be enormously more computationally efficient. Moreover, as we demonstrate in Section \ref{sec:3}, these computational savings come at a very little (if any) accuracy cost.

On the other hand, if $K < \lambda^{-1/(2r)}$ then the right side of \eqref{eq:rate} becomes
\begin{align*}
\int_{0}^1 |\widehat{\mu}_n(t) -\mu_0(t)|^2 \dd t = O_{\mathbb{P}}\left(  n^{-1} + (nm)^{-1}K  + K^{-2j} +  \lambda^2 K^{2r}\right), 
\end{align*}
which is referred to as the "\textit{small knot scenario}". The interested reader is referred to \citep{Xiao:2020} for a detailed discussion of these situations. Here, we only note the remarkable fact that if $K$ and $\lambda$ grow and decay at appropriate rates, then the parametric rate of convergence $n^{-1}$ is attainable even with discretely observed functional data. To see this, in the large knot scenario assume that $m \geq n^{1/(2r)}$ and take $\lambda \approx (mn)^{-2r/(2r+1)}$ and $K \geq n^{1/(2j)}$. In the small knot scenario, assume again that $m \geq n^{1/(2r)}$ and take $\lambda = O(K^{-2r})$ and $n^{1/(2r)} \leq K \leq m$. These choices lead to $\|\widehat{\mu}_n-\mu_0\|^2 = O_{\mathbb{P}}(n^{-1})$ in both situations.

The critical breakpoint $K \approx \lambda^{-1/(2r)}$ in the asymptotic behaviour of penalized spline estimators was identified by \citet{Claeskens:2009} in nonparametric regression. It is interesting to see that that although Theorem~\ref{Thm:1} establishes a very different asymptotic behaviour than in nonparametric regression, it is the very same breakpoint that separates the large and small knot asymptotic regimes.  

\begin{figure}[H]
\centering
\subfloat{\includegraphics[width = 0.48\textwidth]{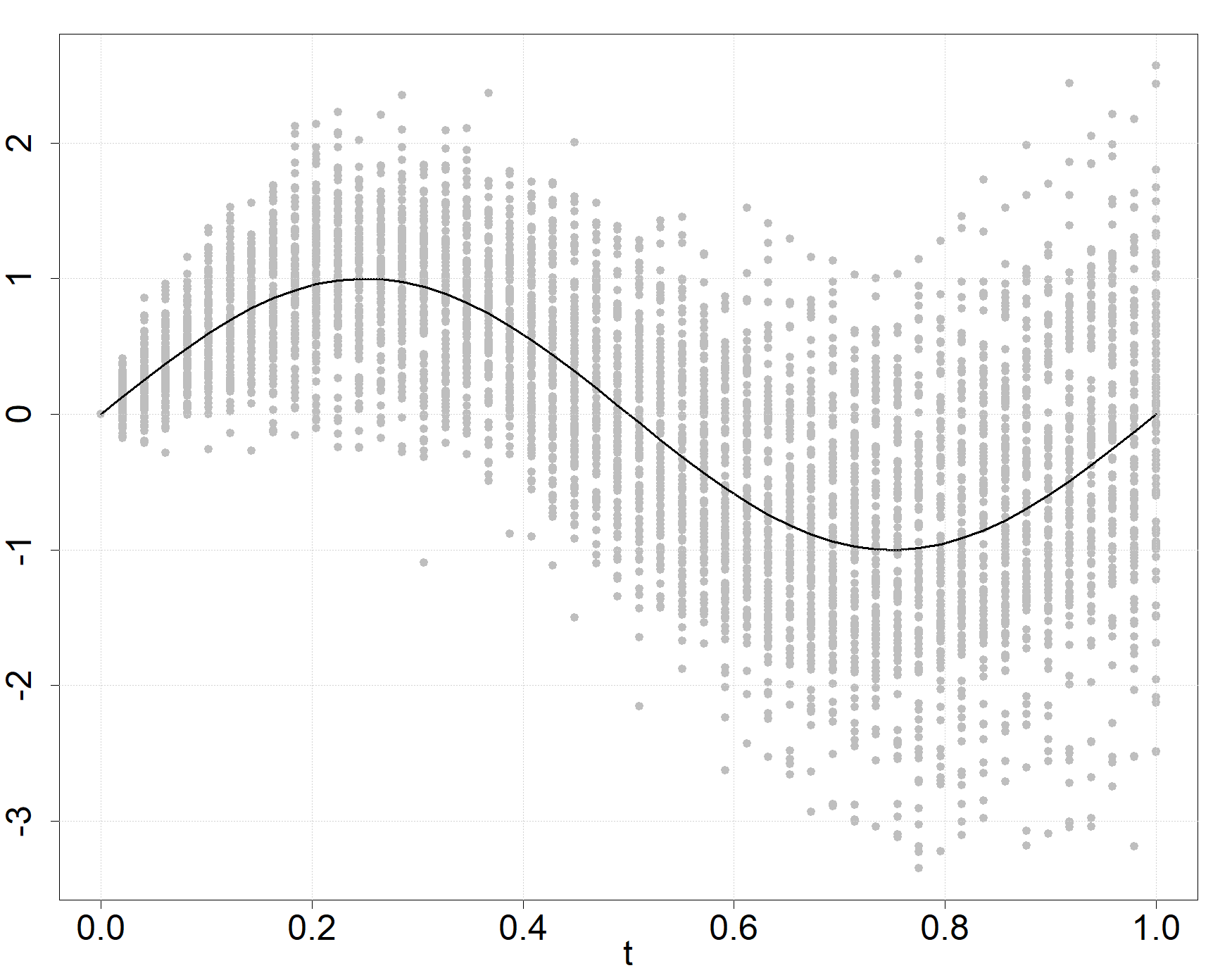}}
\subfloat{\includegraphics[width = 0.48\textwidth]{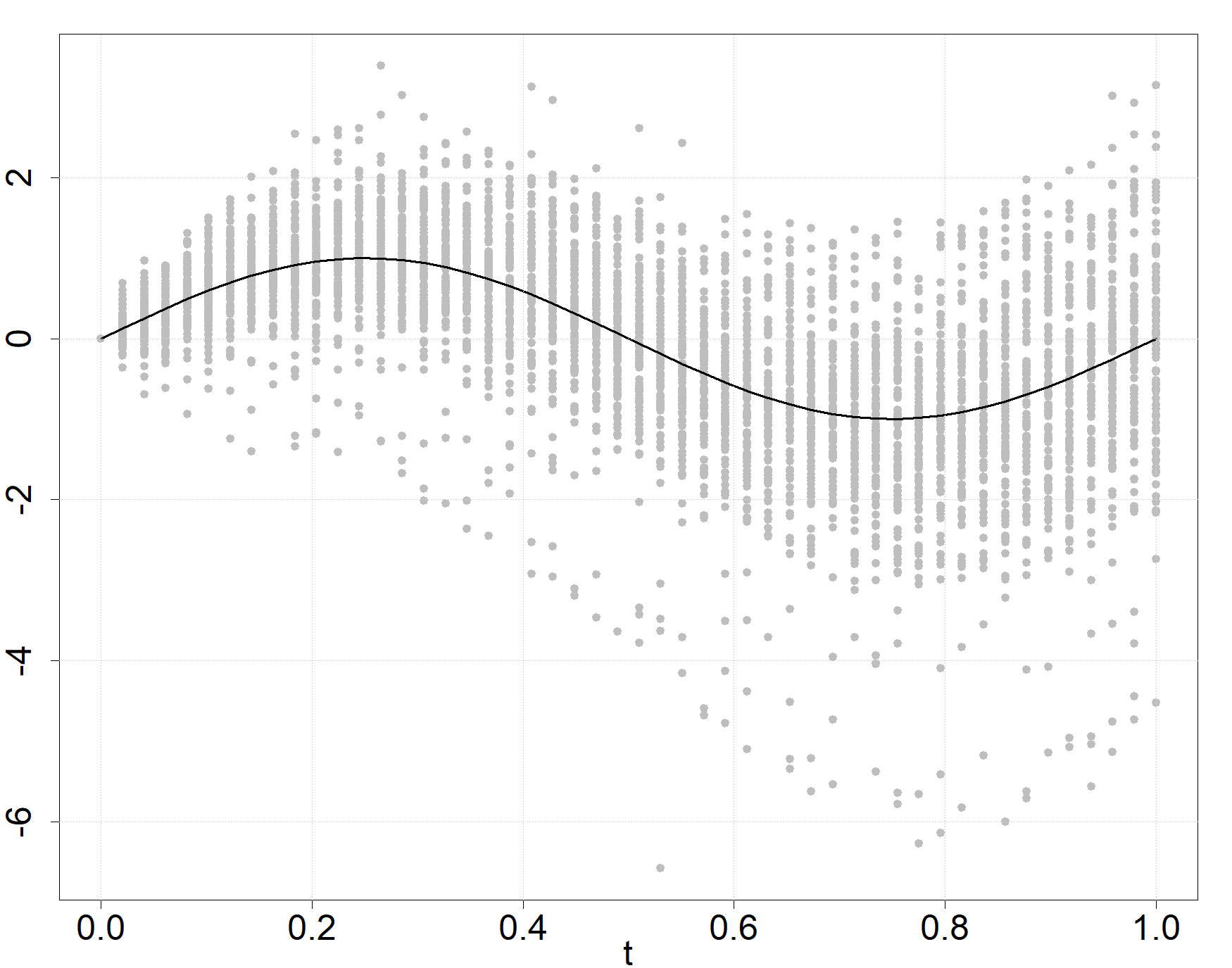}}
\caption{A sample of 100 curves $X_i$ discretized at $50$ points within $[0,1]$ generated with $Z_{ik} \sim N(0,1)$ (left) and with $Z_{ik} \sim t_5$ (right). The line (\full) depicts the true mean function $\mu_1(t) = \sin(2\pi t)$.}
\label{fig:curves}
\end{figure}

\section{Estimation Performance and Computing Times}
\label{sec:4}

In our numerical experiments, we are interested in assessing (i) estimation performance with respect to a variety of $X_i$ or $\zeta_i$ in \eqref{eq:discrfd} and (ii) the computational effort relative to the robust smoothing spline estimator of \citep{Kal:2023b}. In particular, the estimators to be compared are:
\begin{itemize}
\item The Penalized Spline Least-Squares Estimator of \citep{Xiao:2020} with
$\rho(x) = x^2$ in \eqref{eq:est},  abbreviated as PenLS.
\item The Penalized Spline Least Absolute Deviations Estimator with $
\rho(x) = |x|$ in \eqref{eq:est}, abbreviated as PenLAD.
\item The Smoothing Spline Least Absolute Deviations Estimator with $\rho(x) = |x|$ of \citep{Kal:2023b}, abbreviated as SmLAD.
\end{itemize}
We have fitted the penalized spline estimators PenLS and PenLAD with $p=4$, $r=2$ and $K=30$ equidistant knots within $[0,1]$. For the smoothing spline estimator SmLAD we have taken $r=2$, which also leads to spline estimators of order $4$ but with knots at all distinct $T_{ij}$. The robust estimators are computed with iteratively reweighted least-squares and for all estimators we have selected the penalty parameter $\lambda$ with weighted Generalized Cross Validation in the manner described in \citep{Kal:2023b}. To ensure that these estimators are directly comparable we have implemented them ourselves in C++ with an R-interface \citep{R:2025}; all related source functions along with the R-scripts reproducing our plots and tables may be found at \url{https://github.com/ioanniskalogridis/Robust-Functional-Location-Estimation}. 

To emulate the setting of discretely sampled functional data, we have generated curves $X_i$ on $[0,1]$ according to the truncated Karhunen-Lo\'eve decomposition:
\begin{align*}
X_i(t) = \mu_0(t) +  2^{1/2}\sum_{k=1}^{50} Z_{ik} \frac{\sin((k-1/2)\pi t)}{(k-1/2)\pi}, \quad (i=1, \ldots, 100),
\end{align*}
where $\mu_0(t)$ is either $\mu_1(t) = \sin(2 \pi t)$ or $\mu_2(t) = \exp[-(t-0.25)^2/0.01]+\exp[-(t-0.5)^2/0.01] + \exp[-(t-0.75)^2/0.01]$ representing a globally smooth function and a spiky function, respectively. The random variables $Z_{ij}$ are i.i.d. following the $t$-distribution with $5$ degrees of freedom (df); a $t$-distribution with $5$ df instead of the typically used Gaussian distribution leads to some curves that exhibit a different behaviour than the rest; see the left and right panels of Figure~\ref{fig:curves} for a visual comparison.

\begin{table}[H]
\centering
\begin{tabular}{llcccccc}
& & \multicolumn{2}{c}{PenLS} & \multicolumn{2}{c}{PenLAD} & \multicolumn{2}{c}{SmLAD} \\
\multicolumn{1}{c}{$\mu_0$} & df & Mean & SE & Mean & SE & Mean & SE \\ 
\toprule
\multirow{4}{*}{$\mu_1$} & 1 & 87006 & 44968 & 15.46 & 1.43 & 15.53 & 1.43 \\
                         & 2 & 9.04 & 0.40 & 6.29 & 0.23 & 6.33 & 0.23 \\
                         & 5 & 5.08 & 0.23 & 5.10 & 0.18 & 5.12 & 0.18 \\
                         & 10 & 4.79 & 0.20 & 5.00 & 0.17 & 5.03 & 0.17 \\
\midrule
\multirow{4}{*}{$\mu_2$} & 1 & 1824206 & 1216824 & 39.50 & 0.95 & 41.35 & 0.90 \\
                         & 2 & 12.86 & 0.36 & 8.65 & 0.20 & 9.62 & 0.21 \\
                         & 5 & 6.50 & 0.24 & 7.69 & 0.20 & 8.53 & 0.20 \\
                         & 10 & 5.85 & 0.20 & 6.79 & 0.19 & 7.50 & 0.19 \\
\bottomrule
\end{tabular}
\caption{Mean and standard errors $(\times 1000)$ of the mean-squared errors of the competing estimators based on $500$ replications with $n=100$ and $m_i \in [25,40]$ randomly.}
\label{tab:MSE}
\end{table}

We have evaluated each $X_i$ at $50$ equidistant points $T_j$ within $[0,1]$ and have subsequently randomly generated an integer $m_i$ in $[25,40]$ and randomly selected $m_i$ out of these $T_j$. We have only kept the values of the $X_i$ at these $m_i$ points thereby obtaining an array $\{X_{ij}\}_{i=1,j=1}^{n,m_i}$. Finally, we have generated the $Y_{ij}$ according to $Y_{ij} = X_{ij}+0.5\zeta_{ij},\ j=1, \ldots, m_i, i=1, \ldots, 100$, where the noise variables $\zeta_{ij}$ are i.i.d. following a t-distribution with df in $\{2,3,5,10\}$. After computing all estimates from the $Y_{ij}$ we evaluate them at the $T_j$ and assess performance with the mean-squared error:
\begin{align*}
\MSE = \frac{1}{50} \sum_{j=1}^{50} |\widehat{\mu}_n(T_j)-\mu_0(T_j)|^2,
\end{align*}
which is an approximation to the $\mathcal{L}^2([0,1])$ error. Table 1 presents the average MSEs (Mean) along with their standard errors (SE).
 
\begin{figure}[H]
\centering
\subfloat{\includegraphics[width = 0.48\textwidth]{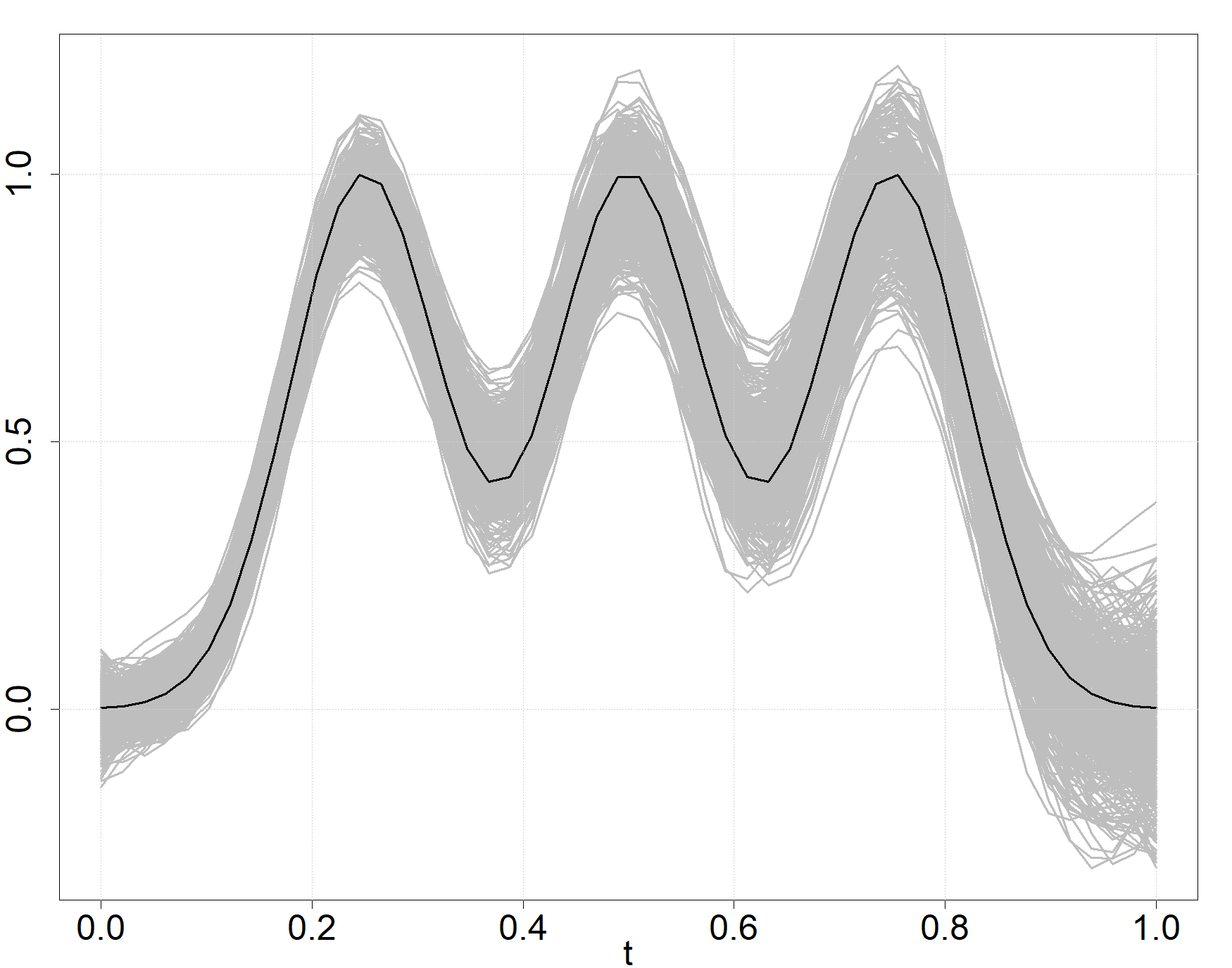}} \
\subfloat{\includegraphics[width = 0.48\textwidth]{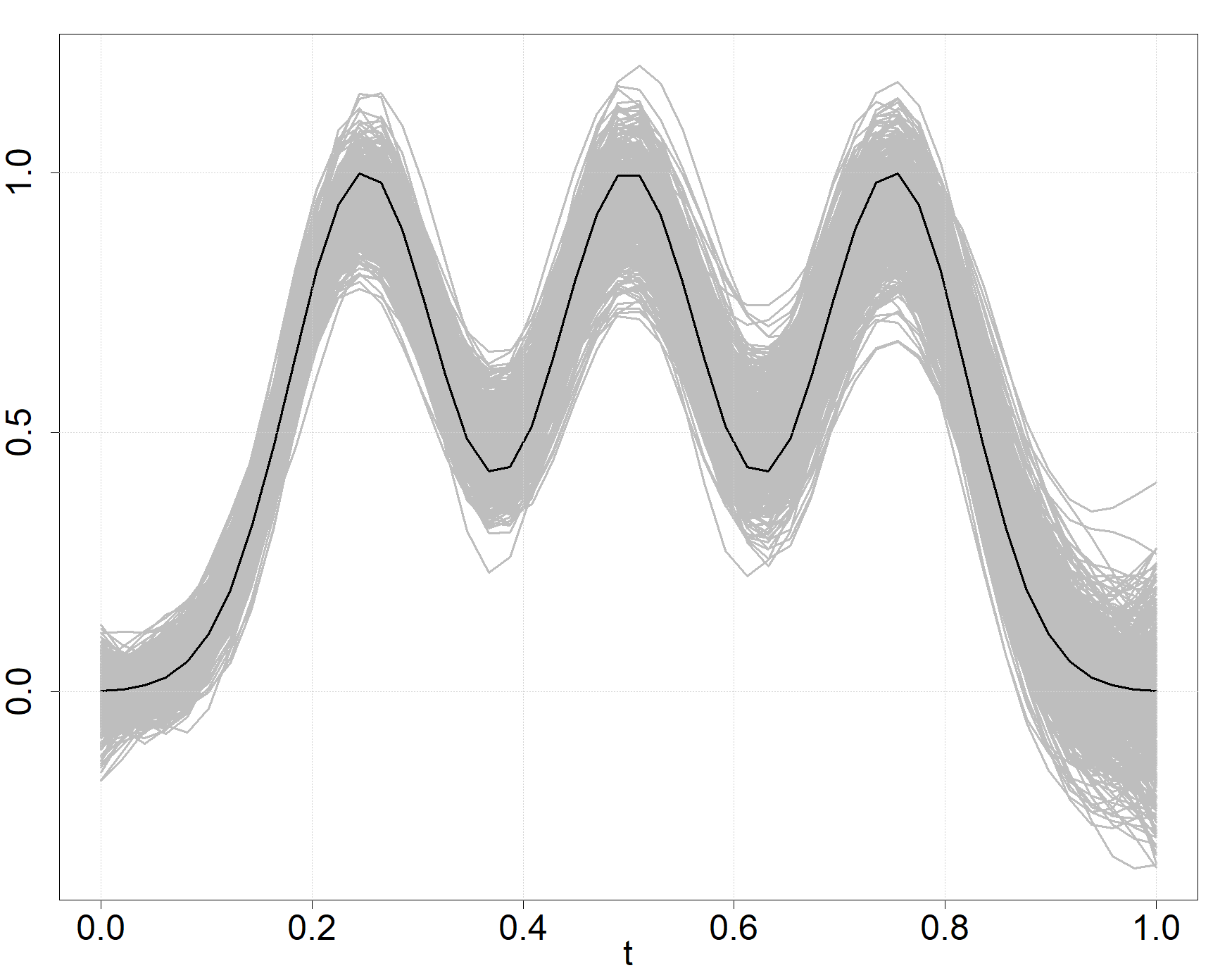}} \\
\subfloat{\includegraphics[width = 0.48\textwidth]{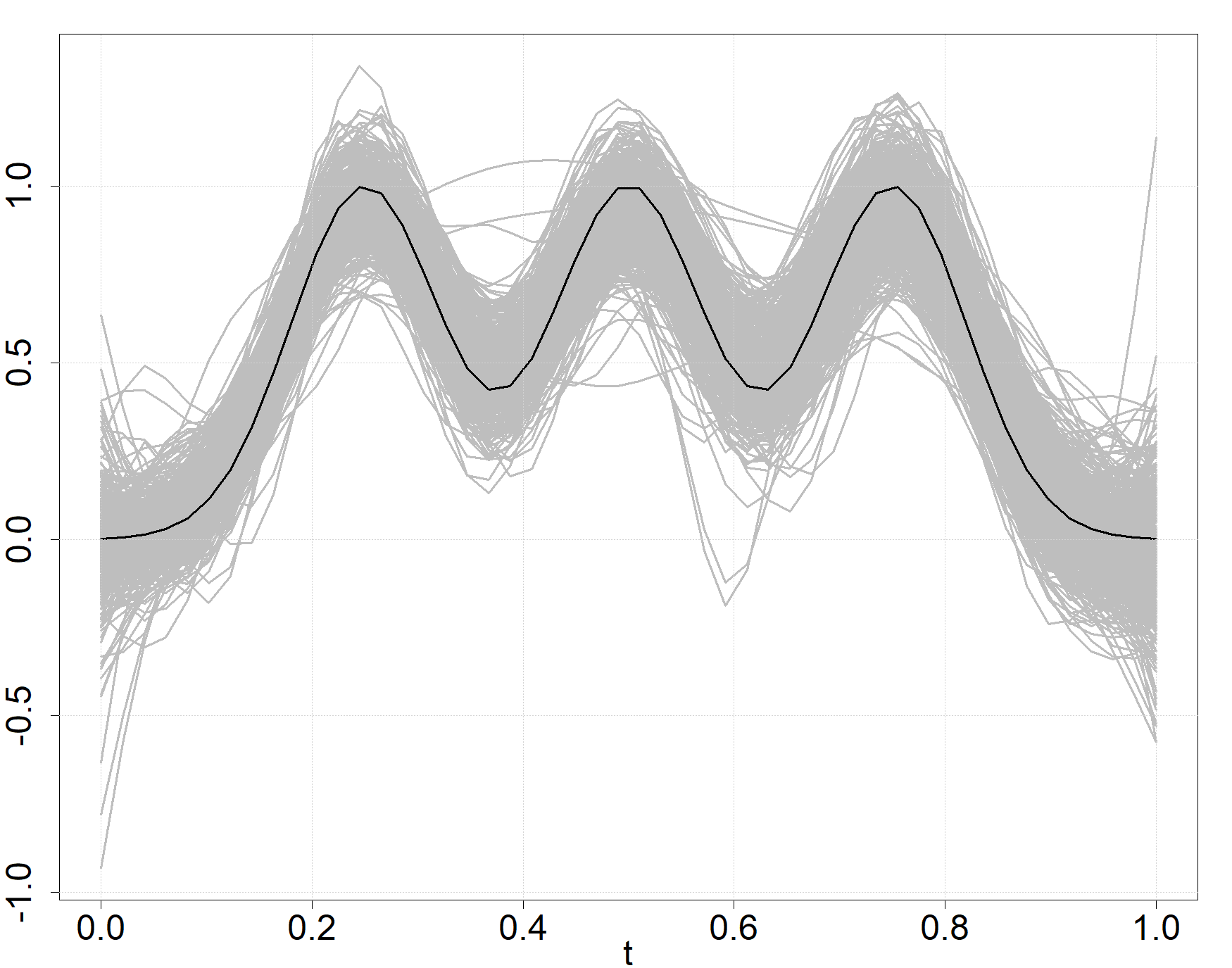}} \
\subfloat{\includegraphics[width = 0.48\textwidth]{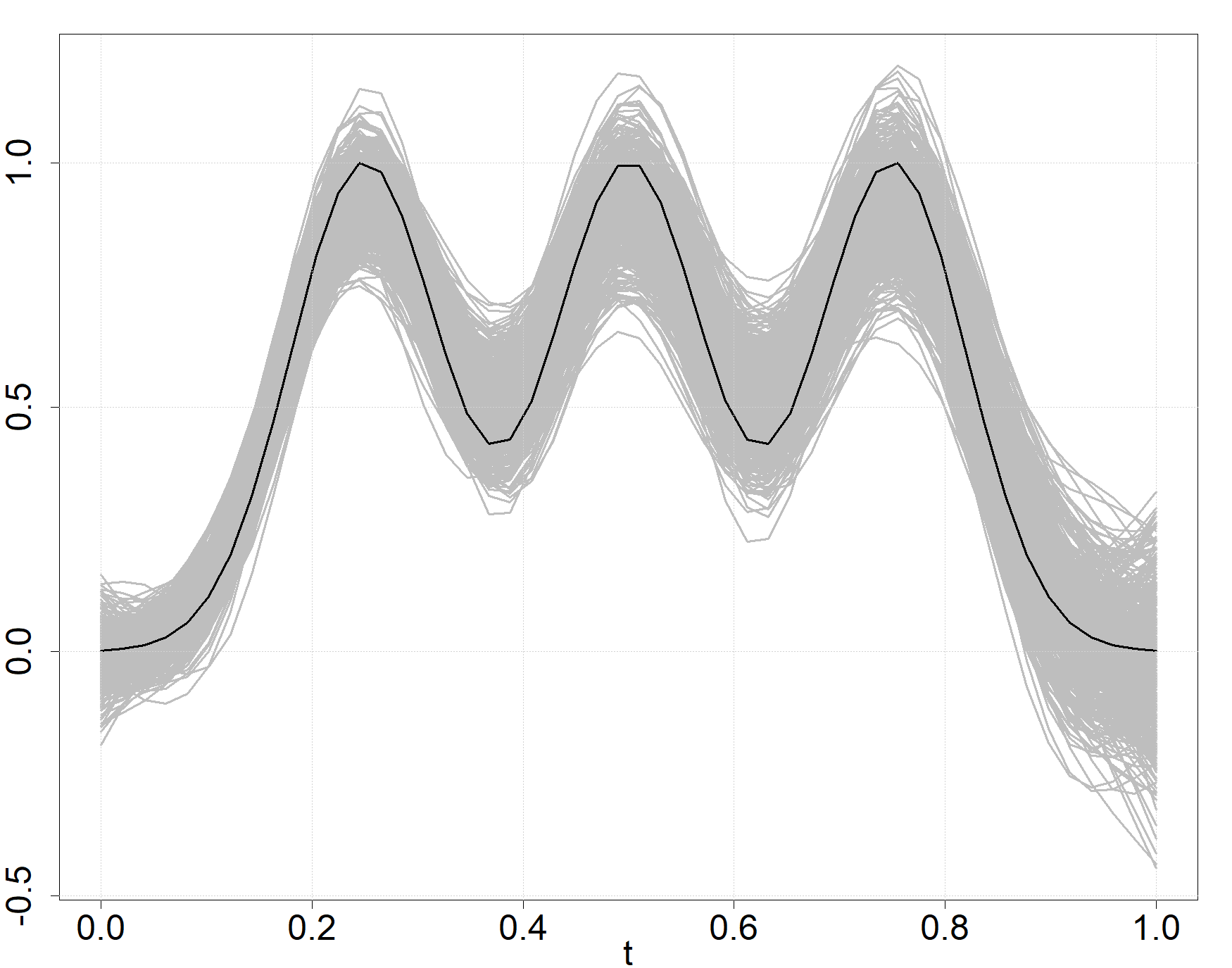}} 
\caption{Top row: 500 estimates for $\mu_2$  with $\zeta_i \sim t_{10}$  by PenLS (left) and PenLAD (right). Bottom row: 500 estimates for $\mu_2$ with $\zeta_i \sim t_2$ by PenLS (left) and PenLAD (right). The line (\full) depicts the true mean function $\mu_2$. }
\label{fig:comp}
\end{figure}

Table \ref{tab:MSE} leads to several interesting findings. Firstly, the least-squares PenLS tends to outperform the robust estimators PenLAD and SmLAD under relatively light-tailed $\zeta_i$ with df $\geq 5$, but the difference is small and barely visible in practice, cf. the top row of Figure \ref{fig:comp}. For smaller df, that is, for heavier tailed $\zeta_i$, the robust estimators perform noticeably better, see also the bottom row of Figure \ref{fig:comp}. It is interesting to observe that although their performance also decreases with heavier-tailed $\zeta_i$, the deterioration is very modest in comparison to PenLS, which becomes completely unreliable in such settings.

The performance of the robust estimators PenLAD and SmLAD is similar with respect to $\mu_1$, but PenLAD outperforms SmLAD with respect to the more challenging $\mu_2$. At first glance this may seem counterintuitive, since SmLAD uses more basis functions and therefore could be thought better at capturing local features. However,  Theorem~\ref{Thm:1} implies the existence of a saturation point beyond which additional basis functions no longer improve accuracy. Beyond this point, the use of unnecessarily large basis expansions can even worsen numerical stability, as larger matrices must be inverted. This effect is also visible in the boxplots of computation times (Figure~\ref{fig:box}), where SmLAD requires substantially more time than PenLAD due to these larger matrix operations.

\begin{figure}[H]
\centering
\includegraphics[width = 0.6\textwidth]{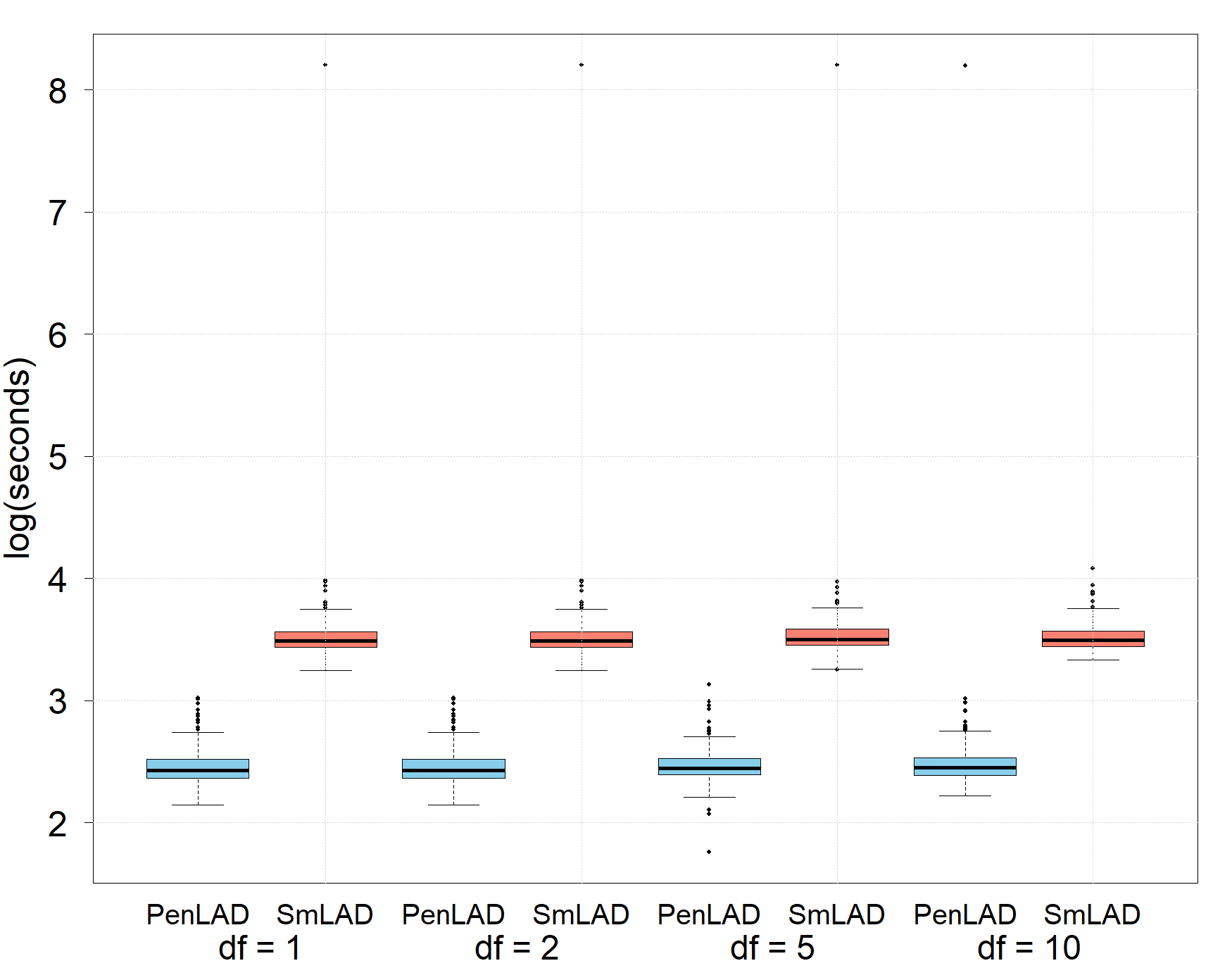} \
\caption{Log of computation times (in seconds) of PenLAD and SmLAD from 500 replications on an Intel Core i5-1600 computer.}
\label{fig:box}
\end{figure}

\section{Conclusion}
\label{sec:5}

The present paper provides theoretical and practical motivation for a family of lower-rank penalized spline estimators with flexible knots and penalties for functional location estimation. Under mild assumptions, these estimators are theoretically optimal, even attaining the parametric rate whenever the data are densely - though not necessarily completely -  observed. In practice, the proposed estimators match or exceed the robustness of existing methods while substantially reducing the computational burden relative to their main competitor. These properties make them well-suited for a wide variety of functional datasets, be it sparsely or densely sampled.

This work also opens up several promising directions for future research. Of particular interest is functional principal component analysis from discretely sampled functional data. Since the estimation problem can be formulated in a form analogous to \eqref{eq:est}, we expect that the methodology and insights developed herein will extend naturally to that setting. We plan to pursue this direction in forthcoming work.

\section{Appendix: Technical Lemmas}
\label{sec:app}

To lighten the notation in the course of our proofs we use $c_0$ to denote generic positive constants whose precise value is unimportant. Thus, the value of $c_0$ may change from appearance to appearance. To simplify our proofs we will, without loss of generality, identify $q$ in assumption \ref{A4} with the uniform density on $[0,1]$, i.e., $q = 1$ on $[0,1]$ and $q = 0$ otherwise. The symbols $\langle \cdot, \cdot \rangle$ and $\|\cdot\|$ will denote the standard $\mathcal{L}^2([0,1])$ inner product and associated norm while $\|\cdot\|_{\infty}$ denotes the sup-norm. Furthermore, we will occasionally use inequalities of the form $A \leq B$ for $A, B$ random variables defined on the same probability space $(\Omega,\mathcal{A},\mathbb{P})$. These inequalities are to be interpreted in the almost sure sense, i.e., $A(\omega) \leq B(\omega)$ for $\mathbb{P}$-almost all $\omega \in \Omega$.

The following two lemmas have been previously proven and we state them here for completeness. The first lemma concerns spline approximation of smooth functions whereas the second presents an RKHS-type inequality bounding the $\|\cdot\|_{\infty}$ norm on $S_{K}^p([0,1])$ in terms of the $\|\cdot\|_{r,\lambda}$ norm

\begin{lemma}[Theorem XII (6) \citep{DB:2000}]. For $j=0, \ldots, p-1$, there exists a $c_0 = c_0(j,p)$ so that, for all $t_1, \ldots, t_{K+p}$ with
\begin{align*}
t_1 = \ldots = t_p = 0 < t_{p+1} < \ldots < t_{K+p} < 1
\end{align*}
and for all $g \in \mathcal{C}^{(j)}([0,1])$,
\begin{align*}
\dist(g, S_{K}^p([0,1]) \leq c_0 h^{j} \sup_{|x-y| \leq h} |g^{(j)}(x) - g^{(j)}(y)|,
\end{align*}
where $\dist(g, S_{K}^p([0,1]) = \inf_{f \in S_{K}^p([0,1]}\|g-f\|_{\infty}$ and $h = \max_i (t_i-t_{i-1})$.
\label{Lem:SA}
\end{lemma}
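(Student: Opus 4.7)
The plan is to reduce the approximation bound to a local Taylor remainder estimate by constructing a suitable \emph{quasi-interpolant} $Q: \mathcal{C}([0,1]) \to S_K^p([0,1])$ and using the inequality $\dist(g, S_K^p([0,1])) \leq \|g - Qg\|_\infty$. The operator $Q$ needs two properties: (i) it reproduces all polynomials of degree $\leq p-1$ (in particular those of degree $\leq j$, since $j \leq p-1$), and (ii) it is locally bounded in sup norm, uniformly in the knot sequence. Once such a $Q$ is in hand, subtracting a degree-$j$ Taylor polynomial collapses the global approximation problem to a local Taylor-remainder estimate on intervals of length $O(h)$.

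The construction of $Q$ uses the biorthogonal dual functionals of the B-spline basis: one builds linear functionals $\mu_i$ (e.g.\ of the form $\mu_i(g) = \int g(x) \theta_i(x) \dd x$ with $\theta_i$ supported in $[t_i, t_{i+p}]$) satisfying the biorthogonality relation $\mu_i(B_{k,p}) = \delta_{ik}$ and the local sup-norm bound $|\mu_i(g)| \leq c_p \|g\|_{\infty, [t_i, t_{i+p}]}$ with $c_p$ depending only on $p$. Setting $Qg = \sum_{i=1}^{K+p} \mu_i(g) B_{i,p}$, biorthogonality forces $Qf = f$ for every $f \in S_K^p([0,1])$, and in particular $Q$ fixes all polynomials of degree $\leq p-1 \geq j$. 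Since $\sum_i B_{i,p} \equiv 1$ and at most $p$ B-splines are nonzero at any point, one also obtains $\|Qg\|_\infty \leq c_p \|g\|_\infty$.

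For the approximation estimate, fix $t$ in an interknot interval $[t_\ell, t_{\ell+1}]$ and let $I_t := [t_{\ell-p+1}, t_{\ell+p}]$, an interval of length at most $(2p-1)h$ containing the supports of all $B_{i,p}$ that are nonzero at $t$. Let $T$ be a Taylor polynomial of $g$ of degree $j$ anchored at a point of $I_t$. Polynomial reproduction gives $QT = T$, so
\begin{align*}
(g - Qg)(t) = (g - T)(t) - Q(g-T)(t).
\end{align*}
Since $g \in \mathcal{C}^{(j)}([0,1])$, the Lagrange form of the Taylor remainder yields
\begin{align*}
\|g - T\|_{\infty, I_t} \leq \frac{((2p-1)h)^j}{j!}\, \omega_h, \quad \omega_h := \sup_{|x-y|\leq (2p-1)h} |g^{(j)}(x)-g^{(j)}(y)|.
\end{align*}
By the local support of the B-splines and the sup-norm control on the $\mu_i$, only $p$ terms contribute to $Q(g-T)(t)$ and each is bounded by $c_p \|g-T\|_{\infty, I_t}$, so $|Q(g-T)(t)| \leq c_p' h^j \omega_h$. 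Combining with $|g(t)-T(t)| \leq c h^j \omega_h$ and using the subadditivity inequality $\omega_h \leq (2p-1) \sup_{|x-y|\leq h}|g^{(j)}(x)-g^{(j)}(y)|$ gives the claimed bound.

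The main obstacle is the clean production of the dual functionals $\mu_i$ with the uniform local bound $|\mu_i(g)| \leq c_p \|g\|_{\infty, [t_i,t_{i+p}]}$ \emph{independent of the knot sequence} — equivalently, the deep fact that the $L^\infty$ condition number of the B-spline basis admits a bound depending only on the order $p$. This is the heart of de Boor's theory; given that ingredient, the polynomial reproduction plus local Taylor argument above is routine, and the final estimate is obtained by taking the maximum over the interknot intervals.
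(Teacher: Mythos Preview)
The paper does not prove this lemma at all: it is stated in the appendix with the remark ``The following two lemmas have been previously proven and we state them here for completeness,'' and simply attributes the result to Theorem~XII~(6) of \citet{DB:2000}. There is therefore no paper proof to compare against.

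Your sketch is the standard quasi-interpolant argument from de~Boor's Chapter~XII, and it is correct in outline. A couple of minor points worth tightening: (i) for the Taylor remainder step you should record explicitly the integral form $g(x)-T_j(x)=\int_a^x \frac{(x-s)^{j-1}}{(j-1)!}\bigl(g^{(j)}(s)-g^{(j)}(a)\bigr)\dd s$ valid for $g\in\mathcal{C}^{(j)}$, so that the bound in terms of the modulus of continuity of $g^{(j)}$ (rather than a higher derivative) is transparent, and treat $j=0$ separately; (ii) the ``subadditivity'' you invoke is the standard inequality $\omega(k\delta)\le k\,\omega(\delta)$ for the modulus of continuity with integer $k$, which is elementary. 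As you correctly flag, the substantive ingredient is the existence of local dual functionals with a sup-norm bound depending only on $p$ (equivalently, the knot-independent $L^\infty$ condition number of the B-spline basis); this is exactly what de~Boor supplies, and once granted your local Taylor argument goes through.
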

\noindent
In particular, under assumption \ref{A2}, Lemma~\ref{Lem:SA} implies the existence of an $s_{\mu_0} \in S_{K}^p([0,1])$ with the property that
\begin{align}
\label{eq:SA}
\|\mu_0 - s_{\mu_0}\|_{\infty} = O(K^{-j}).
\end{align} 

\begin{lemma}[Proposition 1 of \citep{Kal:2023}] If Assumption \ref{A2} is satisfied and $p >r \geq 1$ then there exists $c_0>0$ such that for every $f \in S_{K}^p([0,1])$ and $\lambda \in [0,1]$, 
\begin{align*}
\sup_{t \in [0,1]}|f(t)| \leq c_0 \min(K^{1/2}, \lambda^{-1/(4r)}) \|f\|_{r,\lambda}. 
\end{align*}
\label{Lem:RKHS}
\end{lemma}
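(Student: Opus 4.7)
The plan is to establish the two bounds $c_0 K^{1/2}\|f\|_{r,\lambda}$ and $c_0 \lambda^{-1/(4r)}\|f\|_{r,\lambda}$ separately, each via a classical inequality, and then take the minimum. Rather than working through the explicit spectral representation of the reproducing kernel from Lemma~\ref{Lem:RK} (which would require quantitative control on the joint behaviour of the eigenpairs $(\gamma_j,e_j)$ and appears genuinely delicate), I would argue directly with spline and Sobolev inequalities.

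\textbf{Step 1: the $K^{1/2}$ bound.} Expand $f = \sum_{j=1}^{K+p} c_j B_{j,p}$. Because the B-splines are nonnegative and form a partition of unity, $\|f\|_\infty \leq \max_j |c_j| \leq (\sum_j c_j^2)^{1/2}$. Under the quasi-uniform knot condition \ref{A2}, the classical de~Boor stability estimate yields $\sum_j c_j^2 \leq c_0 K \|f\|^2$. Combined with $\|f\| \leq \|f\|_{r,\lambda}$, this gives $\|f\|_\infty \leq c_0 K^{1/2} \|f\|_{r,\lambda}$.

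\textbf{Step 2: the $\lambda^{-1/(4r)}$ bound.} Since $r<p$, every $f\in S_K^p([0,1])$ lies in $H^r([0,1])$, so I would invoke the Gagliardo--Nirenberg--Sobolev interpolation inequality on the bounded interval $[0,1]$:
\begin{align*}
\|f\|_\infty \leq c_0\bigl(\|f\| + \|f\|^{1-1/(2r)}\|f^{(r)}\|^{1/(2r)}\bigr).
\end{align*}
Using $\lambda\leq 1$ the first summand is bounded by $\|f\|_{r,\lambda}\leq \lambda^{-1/(4r)}\|f\|_{r,\lambda}$. For the second, the weighted AM--GM inequality applied with weights $(2r-1)/(2r)$ and $1/(2r)$ to $\|f\|^2$ and $\lambda\|f^{(r)}\|^2$ gives
\begin{align*}
\|f\|^{2(2r-1)/(2r)}\bigl(\lambda\|f^{(r)}\|^2\bigr)^{1/(2r)} \leq \|f\|^2 + \lambda\|f^{(r)}\|^2 = \|f\|_{r,\lambda}^2.
\end{align*}
Taking square roots and pulling out the factor of $\lambda^{1/(4r)}$ yields $\|f\|^{1-1/(2r)}\|f^{(r)}\|^{1/(2r)} \leq \lambda^{-1/(4r)}\|f\|_{r,\lambda}$. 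Combining the two estimates produces $\|f\|_\infty \leq c_0 \lambda^{-1/(4r)}\|f\|_{r,\lambda}$. Taking the minimum of the bounds from Steps~1 and~2 then delivers the claim.

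\textbf{Main obstacle.} The two ingredients I rely on are the de~Boor $\ell^2$--$L^2$ coefficient stability for splines (Step~1) and the Gagliardo--Nirenberg interpolation on a bounded interval (Step~2). Both are classical, but the main point requiring care is to verify that their constants depend only on the smoothness parameters $p$ and $r$ and on the mesh-ratio constant $M$ from \ref{A2}, and not on $K$, $\lambda$, or $f$. Once this uniformity is in place, the Young/AM--GM balancing is mechanical, and the restriction $\lambda\leq 1$ is invoked only to absorb the lower-order $\|f\|$ term of the interpolation inequality.
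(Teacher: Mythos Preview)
Your proposal is correct. Note, however, that the paper does not actually supply its own proof of this lemma: it is stated as a citation of Proposition~1 in \citep{Kal:2023} and used as a black box. The only hint the paper gives about the underlying argument appears later, in the proof of Lemma~\ref{Lem:I4}, where it remarks that the proof of Proposition~1 in \citep{Kal:2023} ``in particular, shows that, under assumption~\ref{A2}, $\|g\|_\infty \leq c_0 K^{1/2}\|g\|$ for every $g\in S_K^p([0,1])$.'' This is exactly your Step~1 via de~Boor coefficient stability, so at least that half of your argument matches the original source. For Step~2, your route through the Gagliardo--Nirenberg interpolation inequality followed by weighted AM--GM is clean and self-contained; the constants in both ingredients depend only on $r$ (for Gagliardo--Nirenberg on $[0,1]$) and on $p$ and the mesh ratio $M$ (for the de~Boor estimate), so the uniformity you flag as the main obstacle is indeed available. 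Since the paper provides no proof to compare against beyond that one remark, there is nothing further to contrast.
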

\noindent
Recall that equipped with $\langle \cdot, \cdot \rangle_{r,\lambda}$ is an RKHS with a symmetric reproducing kernel $\mathcal{R}_{p,r,\lambda}: [0,1]^2 \to \mathbbm{R}$ such that $x \mapsto \mathcal{R}_{p,r,\lambda}(x,y) \in S_{K}^p([0,1])$ for every $y \in [0,1]$. Therefore, by the reproducing property and Lemma \ref{Lem:RKHS},
\begin{align*}
 \| \mathcal{R}_{p,r,\lambda}(x,\cdot) \|_{r,\lambda}^2 =  \langle \mathcal{R}_{p,r,\lambda}(x,\cdot), \mathcal{R}_{p,r,\lambda}(\cdot,x) \rangle_{r,\lambda} = \mathcal{R}_{p,r,\lambda}(x,x)   \leq c_0 \min(K^{1/2}, \lambda^{-1/(4r)}) \| \mathcal{R}_{p,r,\lambda}(x,\cdot) \|_{r,\lambda}.
\end{align*}
If $\| \mathcal{R}_{p,r,\lambda}(x,\cdot) \|_{r,\lambda}>0$, dividing both sides with $\| \mathcal{R}_{p,r,\lambda}(x,\cdot) \|_{r,\lambda}$ yields
\begin{align*}
\| \mathcal{R}_{p,r,\lambda}(x,\cdot) \|_{r,\lambda}   \leq c_0 \min(K^{1/2}, \lambda^{-1/(4r)}),
\end{align*}
and since the right side is always positive, the inequality is also true whenever $\| \mathcal{R}_{p,r,\lambda}(x,\cdot) \|_{r,\lambda}=0$. Furthermore, as $c_0$ does not depend on $x$, the inequality is uniform for $x \in [0,1]$. In conclusion,
\begin{align}
\label{eq:rkineq}
\sup_{x \in [0,1]}\| \mathcal{R}_{p,r,\lambda}(x,\cdot) \|_{r,\lambda}   \leq c_0 \min(K^{1/2}, \lambda^{-1/(4r)}).
\end{align}

\begin{lemma}
\label{Lem:I1}
Under the conditions of Theorem~\ref{Thm:1},
\begin{align*}
\inf_{f \in S_{K}^p([0,1]):\|f\|_{r,\lambda} = D} I_1(f) & \geq c_0 D^2 C_n+ O(1) D C_n,
\end{align*}
as $n \to \infty$, i.e., \eqref{asser:1} holds.
\end{lemma}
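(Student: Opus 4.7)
The strategy is to carry out the (outer) expectation in $I_1$ via assumptions~\ref{A4} and~\ref{A7}, thereby reducing $I_1(f)$ to a positive-definite quadratic form in $f$ plus a cross term and a lower-order remainder. The quadratic form, together with the penalty $\lambda C_n\|f^{(r)}\|^2$ already present in $I_1$, will furnish the leading term $c_0 C_n D^2$; the cross term coming from the spline approximation error will contribute $O(1)C_n D$; and the $o(|u|)$ remainder from \ref{A7} will be absorbed. The workhorses are \ref{A7}, the spline approximation bound \eqref{eq:SA}, and the RKHS estimate in Lemma~\ref{Lem:RKHS}.

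First I would condition on $T_{ij}$. Since $T_{ij}$ is independent of $\epsilon_i$ by \ref{A4}, and by \ref{A7} we have $\mathbb{E}[\psi(\epsilon_1(t))]=0$ together with $\mathbb{E}[\psi(\epsilon_1(t)+u)] = \delta(t)u + o(|u|)$ uniformly in $t$, Fubini's theorem lets me replace the integrand of each summand of $I_1$ by $\delta(T_{ij})u + o(|u|)$. Integrating in $u$ from $R_{ij}$ to $R_{ij}+C_n^{1/2}f(T_{ij})$ yields the main contribution
\begin{align*}
\delta(T_{ij})R_{ij}C_n^{1/2}f(T_{ij}) + \tfrac{1}{2}\delta(T_{ij})C_n f(T_{ij})^2.
\end{align*}
Taking expectation over $T_{ij}\sim U[0,1]$ (invoking the Appendix convention $q\equiv 1$), the quadratic piece averages to $\tfrac{1}{2}C_n\int_0^1 \delta(t)|f(t)|^2\,\dd t \geq c_0 C_n\|f\|^2$ since $\inf_t\delta(t)>0$, while the cross piece is bounded via Cauchy--Schwarz and \eqref{eq:SA} by $C_n^{1/2}\|\delta\|_\infty\|\mu_0-s_{\mu_0}\|\|f\|\leq c_0 C_n^{1/2}K^{-j}D$. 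Because $K^{-2j}\leq C_n$ and $\|f\|\leq\|f\|_{r,\lambda}=D$, the cross piece is indeed $O(1)C_n D$.

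Adding back the nonnegative penalty term, I would obtain
\begin{align*}
I_1(f)\geq c_0 C_n\|f\|^2 + \lambda C_n\|f^{(r)}\|^2 + O(1)C_n D - \text{(remainder)}.
\end{align*}
Choosing $c_0\leq 1$, the first two terms combine into $\geq c_0 C_n(\|f\|^2+\lambda\|f^{(r)}\|^2) = c_0 C_n D^2$ on the sphere $\|f\|_{r,\lambda}=D$, which is the desired leading term. The remainder from the $o(|u|)$ contribution in \ref{A7} is controlled as follows: for any $\epsilon>0$, whenever $u$ lies in a small enough neighbourhood of $0$, the $o(|u|)$ is $\leq \epsilon|u|$; integrating over $[R_{ij},R_{ij}+C_n^{1/2}f(T_{ij})]$ and taking expectation produces a term of order $\epsilon(C_n^{1/2}\|\mu_0-s_{\mu_0}\|\|f\| + C_n\|f\|^2) = o(1)(C_n D + C_n D^2)$, which is absorbed into the $c_0 C_n D^2$ and $O(1)C_n D$ terms for $n$ large enough.

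The principal obstacle is the uniform validity of the $o(|u|)$ estimate over the entire sphere $\{\|f\|_{r,\lambda}=D\}$. The relevant $u$-interval has length $|R_{ij}|+C_n^{1/2}|f(T_{ij})|$, and invoking \ref{A7} uniformly requires this to tend to zero uniformly in $T_{ij}$ and in $f$. Here \eqref{eq:SA} handles $R_{ij}$, while $|f(T_{ij})|\leq\|f\|_\infty\leq c_0 D\min(K^{1/2},\lambda^{-1/(4r)})$ by Lemma~\ref{Lem:RKHS}, so I need $C_n^{1/2}\min(K^{1/2},\lambda^{-1/(4r)})\to 0$. Checking this from the tuning conditions of Theorem~\ref{Thm:1} (in particular $K\min\{\lambda,\lambda^2K^{2r}\}\to 0$ and $K(nm)^{-1}\min(\lambda^{-1/(2r)},K)\to 0$, alongside $n^{\delta-1}K^3\to 0$) is the delicate step; once it is secured, everything else reduces to the essentially deterministic computation described above.
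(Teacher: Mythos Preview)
Your proposal is correct and follows essentially the same route as the paper: both proofs condition on $\mathcal{T}$, invoke \ref{A7} to linearize the inner expectation as $\delta(T_{ij})u(1+o(1))$, integrate to extract a quadratic term $c_0 C_n|f(T_{ij})|^2$ plus a cross term controlled via $|R_{ij}|=O(K^{-j})=O(C_n^{1/2})$, then take the outer expectation and add the penalty to assemble $c_0 C_n\|f\|_{r,\lambda}^2$. The step you flag as delicate---showing the integration domain shrinks uniformly---is dispatched in the paper simply by bounding $C_n^{1/2}\|f\|_\infty \leq c_0 C_n^{1/2}K^{1/2}D$ via Lemma~\ref{Lem:RKHS} and noting that $C_nK\to 0$ follows term by term from the definition of $C_n$ and the stated tuning conditions (in particular $n^{\delta-1}K^3\to 0$, $K\min\{\lambda,\lambda^2K^{2r}\}\to 0$, $K(nm)^{-1}\min(\lambda^{-1/(2r)},K)\to 0$, and $K^{1-2j}\to 0$ since $j\geq 1$).
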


\begin{proof}

We begin by observing that the domain of integration shrinks to zero as on the one hand, by Lemma~\ref{Lem:SA},
\begin{align*}
\max_{i,j}|R_{ij}| \leq \sup_{x \in [0,1]}|\mu_0(x) - s_{\mu_0}(x)| = O(K^{-j}) = o(1),
\end{align*}
since $K \to \infty$, and, on the other hand, by Lemma~\ref{Lem:RKHS}, for every $f \in S_{K}^p([0,1])$ such that $\|f\|_{r,\lambda} \leq D$ we have
\begin{align*}
C_n^{1/2}\sup_{x \in [0,1]} |f(x)| \leq c_0 C_n^{1/2} \min(K^{1/2},\lambda^{-1/(4r)}) D \leq  c_0 C_n^{1/2} K^{1/2} D = o(1),
\end{align*}
as $n \to \infty$, by our limit assumptions. Therefore, using assumption \ref{A7} we obtain
\begin{align*}
 \int_{R_{ij}}^{R_{ij}+C_n^{1/2}f(T_{ij})}\mathbb{E}[\psi(\epsilon_{ij}+u) - \psi(\epsilon_{ij}) \vert \mathcal{T}] \dd u & =  \int_{R_{ij}}^{R_{ij}+C_n^{1/2}f(T_{ij})} \{\delta(T_{ij})u + o(u)\} \dd u
 \\ &  = \int_{R_{ij}}^{R_{ij}+C_n^{1/2}f(T_{ij})}  \delta(T_{ij})u(1+o(1)) \dd u, \quad (\inf_{x \in [0,1]} \delta(x)>0).
 \\ & = c_0 \delta(T_{ij})C_n |f(T_{ij})|^2(1+o(1)) \\
 & \quad + c_0 C_n^{1/2}f(T_{ij})R_{ij}(1+o(1))
 \\ &  \geq c_0[\inf_{x \in [0,1]}  \delta(x)] C_n |f(T_{ij})|^2 - c_0 C_n^{1/2} |f(T_{ij})||R_{ij}|
 \\  & \geq  c_0 C_n |f(T_{ij})|^2 - c_0 C_n |f(T_{ij})|,  \quad (|R_{ij}| = O(K^{-j}) = O(C_n^{1/2})),
\end{align*}
for some $c_0>0$. Consequently, taking expectations with respect to $\mathcal{T}$, using assumption \ref{A4} and $\|f\|_{r,\lambda}^2 =  \|f\|^2 + \lambda\|f^{(r)}\|^2$, we find
\begin{align*}
I_1(f) & \geq c_0 C_n \|f\|^2 - c_0 C_n \mathbb{E}[|f(T_{ij})|] + C_n \lambda \|f^{(r)}\|^2 
\\ &\geq  c_0 C_n \|f\|^2 - c_0 C_n \{\mathbb{E}[|f(T_{ij})|^2]\}^{1/2} + \lambda C_n  \|f^{(r)}\|^2, \quad (\mathbb{E}[|X|] \leq \{\mathbb{E}[|X|^2]\}^{1/2})
\\ & \geq \min(c_0,1) C_n \|f\|_{r,\lambda}^2 - c_0 C_n \|f\|.
\end{align*}
So, taking the infimum and using $\|f\|\leq \|f\|_{r,\lambda}$, we finally obtain
\begin{align*}
\inf_{f \in S_{K}^p([0,1]):\|f\|_{r,\lambda} = D} I_1(f) & \geq c_0 D^2 C_n+ O(1) D C_n,
\end{align*}
which is the result of the Lemma.

\end{proof}

\begin{lemma}
\label{Lem:I2}
Under the conditions of Theorem~\ref{Thm:1},
\begin{align*}
\sup_{f \in S_{K}^p([0,1]):\|f\|_{r,\lambda} \leq D} |I_3(f)| &= O_{\mathbb{P}}(1) D C_n,
\end{align*}
as $n \to \infty$, i.e., \eqref{asser:3} holds.
\end{lemma}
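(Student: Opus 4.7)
The plan is to exploit the reproducing kernel structure of $S_K^p([0,1])$ endowed with $\langle \cdot, \cdot \rangle_{r,\lambda}$ to linearize the supremum over $f$, and then to control the resulting RKHS norm by a second-moment calculation that splits diagonal from off-diagonal contributions. By the reproducing property (ii), every $f \in S_K^p([0,1])$ satisfies $f(T_{ij}) = \langle f, \mathcal{R}_{p,r,\lambda}(T_{ij},\cdot)\rangle_{r,\lambda}$, so
\begin{align*}
I_3(f) = C_n^{1/2}\Bigl\langle f,\; G_n \Bigr\rangle_{r,\lambda}, \qquad G_n := \frac{1}{n}\sum_{i=1}^n \frac{1}{m_i}\sum_{j=1}^{m_i} \psi(\epsilon_{ij})\,\mathcal{R}_{p,r,\lambda}(T_{ij},\cdot).
\end{align*}
An application of Cauchy--Schwarz immediately yields $\sup_{\|f\|_{r,\lambda}\le D}|I_3(f)| \le C_n^{1/2}\,D\,\|G_n\|_{r,\lambda}$. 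Hence the assertion reduces to showing that $\|G_n\|_{r,\lambda} = O_{\mathbb{P}}(C_n^{1/2})$.

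I would establish this by bounding $\mathbb{E}\|G_n\|_{r,\lambda}^2$ and applying Markov's inequality. By assumption \ref{A4}, the error processes are independent across $i$ and, by assumption \ref{A7}, $\mathbb{E}[\psi(\epsilon_1(t))]=0$ for all $t$ (combined with the independence of $T_{ij}$ and $\epsilon_i$, this gives $\mathbb{E}[\psi(\epsilon_{ij})\,\mathcal{R}_{p,r,\lambda}(T_{ij},\cdot)]=0$). Independence across $i$ therefore yields
\begin{align*}
\mathbb{E}\|G_n\|_{r,\lambda}^2 = \frac{1}{n^2}\sum_{i=1}^n \frac{1}{m_i^2}\sum_{j,k=1}^{m_i}\mathbb{E}\bigl[\psi(\epsilon_{ij})\psi(\epsilon_{ik})\,\mathcal{R}_{p,r,\lambda}(T_{ij},T_{ik})\bigr],
\end{align*}
where I used $\langle \mathcal{R}_{p,r,\lambda}(T_{ij},\cdot),\mathcal{R}_{p,r,\lambda}(T_{ik},\cdot)\rangle_{r,\lambda} = \mathcal{R}_{p,r,\lambda}(T_{ij},T_{ik})$ by the reproducing property.

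The diagonal terms ($j=k$) are handled by noting that $\mathcal{R}_{p,r,\lambda}(x,x) = \|\mathcal{R}_{p,r,\lambda}(x,\cdot)\|_{r,\lambda}^2 \le c_0 \min(K, \lambda^{-1/(2r)})$ uniformly in $x$ by \eqref{eq:rkineq}, so each contributes at most $c_0\min(K,\lambda^{-1/(2r)})\sup_t\mathbb{E}[\psi(\epsilon_1(t))^2]$, finite by \ref{A7}. The off-diagonal terms ($j\ne k$) are the main obstacle, since $\psi(\epsilon_i(T_{ij}))$ and $\psi(\epsilon_i(T_{ik}))$ are dependent through the common curve $\epsilon_i$. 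To handle these, I would condition on $\epsilon_i$: using the independence of $\{T_{ij}\}_j$ from $\epsilon_i$ and the reduction $q\equiv 1$,
\begin{align*}
\mathbb{E}\bigl[\psi(\epsilon_{ij})\psi(\epsilon_{ik})\mathcal{R}_{p,r,\lambda}(T_{ij},T_{ik})\,\bigl|\,\epsilon_i\bigr] = \int_0^1\!\!\int_0^1 \mathcal{R}_{p,r,\lambda}(x,y)\,\psi(\epsilon_i(x))\psi(\epsilon_i(y))\,dx\,dy,
\end{align*}
and then Lemma~\ref{Lem:RK}\ref{itm:rk3} applied pathwise to $g=\psi(\epsilon_i(\cdot))$ bounds the right side by $\int_0^1 \psi(\epsilon_i(x))^2 dx$. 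Taking expectations and using \ref{A7} shows each off-diagonal term is $O(1)$. Combining the two contributions gives
\begin{align*}
\mathbb{E}\|G_n\|_{r,\lambda}^2 = O\!\left(\frac{1}{nm}\min(K,\lambda^{-1/(2r)}) + \frac{1}{n}\right) = O(C_n),
\end{align*}
where the last bound follows directly from the definition of $C_n$. Markov's inequality then gives $\|G_n\|_{r,\lambda} = O_{\mathbb{P}}(C_n^{1/2})$, and therefore $\sup_{\|f\|_{r,\lambda}\le D}|I_3(f)| = O_{\mathbb{P}}(1)\,D\,C_n$, as required. The crux of the argument is the off-diagonal bound, where Lemma~\ref{Lem:RK}\ref{itm:rk3} converts the within-curve dependence into a harmless uniform second-moment condition on $\psi(\epsilon_1(\cdot))$.
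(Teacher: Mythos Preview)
Your proposal is correct and follows essentially the same route as the paper's proof: linearize $I_3$ via the reproducing property, apply Cauchy--Schwarz, and bound $\mathbb{E}\|G_n\|_{r,\lambda}^2$ by splitting diagonal and off-diagonal terms, controlling the latter through Lemma~\ref{Lem:RK}\,Part~\ref{itm:rk3}. The only cosmetic difference is that the paper conditions first on $\mathcal{T}$ and then iterates expectations (invoking Fubini), whereas you condition on $\epsilon_i$ and integrate out the $T_{ij}$; also note that independence of the error processes across $i$ is assumption \ref{A3}, not \ref{A4}.
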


\begin{proof}
For every $f \in S_{K}^p([0,1])$, by the reproducing property and the linearity of the inner product, we obtain
\begin{align*}
\frac{1}{n} \sum_{i=1}^n \frac{1}{m_i} \sum_{j=1}^{m_i} \psi(\epsilon_{ij}) f(T_{ij}) & = \frac{1}{n} \sum_{i=1}^n \frac{1}{m_i} \sum_{j=1}^{m_i} \psi(\epsilon_{ij}) \langle f, \mathcal{R}_{p,r,\lambda}(T_{ij},\cdot) \rangle_{r,\lambda}
\\ &  = \bigg \langle f, \frac{1}{n} \sum_{i=1}^n \frac{1}{m_i} \sum_{j=1}^{m_i} \psi(\epsilon_{ij})\mathcal{R}_{p,r,\lambda}(T_{ij},\cdot) \bigg \rangle_{r,\lambda}.
\end{align*}
Consequently, by the Schwarz inequality,
\begin{align*}
\sup_{f \in S_{K}^p([0,1]):\|f\|_{r,\lambda} \leq D} |I_3(f)| \leq C_n^{1/2} \left\| \frac{1}{n} \sum_{i=1}^n \frac{1}{m_i} \sum_{j=1}^{m_i} \psi(\epsilon_{ij})\mathcal{R}_{p,r,\lambda}(T_{ij},\cdot) \right\|_{r,\lambda} D.
\end{align*}
We now determine the order of the right side. Let $\mathcal{T}$ denote all sampling points. Squaring and taking conditional expectations, using assumptions \ref{A3} and \ref{A7} we obtain,
\begin{align*}
\mathbb{E} \left[ \left\| \frac{1}{n} \sum_{i=1}^n \frac{1}{m_i} \sum_{j=1}^{m_i} \psi(\epsilon_{ij})\mathcal{R}_{p,r,\lambda}(T_{ij},\cdot) \right\|_{r,\lambda}^2 \bigg  \vert \mathcal{T} \right] & = \frac{1}{n^2} \sum_{i=1}^n \frac{1}{m_i^2} \sum_{j=1}^{m_i} \sum_{k=1}^{m_i}  \mathcal{R}_{p,r,\lambda}(T_{ij},T_{ik}) \mathbb{E} [ \psi(\epsilon_{ij}) \psi(\epsilon_{ik}) \vert \mathcal{T}]
\\ & = Z_1 + Z_2,
\end{align*}
say, with 
\begin{align*}
Z_1 & = \frac{1}{n^2} \sum_{i=1}^n \frac{1}{m_i^2} \sum_{j=1}^{m_i}  \mathcal{R}_{p,r,\lambda}(T_{ij}, T_{ij}) \mathbb{E}[| \psi(\epsilon_{ij})|^2 \vert  \mathcal{T}] \\
Z_2 & = \frac{1}{n^2} \sum_{i=1}^n \frac{1}{m_i^2} \sum_{j \neq k } \mathcal{R}_{p,r,\lambda}(T_{ij}, T_{ik}) \mathbb{E}[ \psi(\epsilon_{ij}) \psi(\epsilon_{ik}) \vert \mathcal{T} ].
\end{align*}

We first bound $Z_1$ and subsequently $Z_2$. 	Taking the squared expectation out of the sum, we have
\begin{align*}
Z_1 \leq \sup_{t \in [0,1]} \mathbb{E}[| \psi(\epsilon_{1}(t))|^2]  \frac{1}{n^2} \sum_{i=1}^n \frac{1}{m_i^2} \sum_{j=1}^{m_i}  \mathcal{R}_{p,r,\lambda}(T_{ij}, T_{ij})  = O\left( \frac{\min(K, \lambda^{-1/(2r)})}{nm} \right),
\end{align*}
where we have used assumption \ref{A7}, \eqref{eq:rkineq} (after writing $\mathcal{R}_{p,r,\lambda}(T_{ij}, T_{ij}) = \|\mathcal{R}_{p,r,\lambda}(T_{ij},\cdot)\|_{r,\lambda}^2$) and $m^{-1} = n^{-1} \sum_{i=1}^n m_i^{-1}$. Note that this is a deterministic bound.

For $Z_2$, iterating expectations and using assumptions \ref{A3} and \ref{A4}, we obtain
\begin{align*}
\mathbb{E}[Z_2] & = \frac{1}{n^2} \sum_{i=1}^n \frac{m_i(m_i-1)}{m_i^2} \int_{0}^1 \int_{0}^1  \mathcal{R}_{p,r,\lambda}(x, y) \mathbb{E}\left[ \psi(\epsilon_1(x)) \psi(\epsilon_1(y)) \right] \dd x \dd y
\\ & = \frac{1}{n^2} \sum_{i=1}^n \frac{m_i(m_i-1)}{m_i^2}  \mathbb{E}\left[ \int_{0}^1 \int_{0}^1  \mathcal{R}_{p,r,\lambda}(x, y) \psi(\epsilon_1(x)) \psi(\epsilon_1(y))  \dd x \dd y \right], &&  (\text{Fubini's theorem})
\\ & \leq \frac{1}{n^2} \sum_{i=1}^n \frac{m_i(m_i-1)}{m_i^2}  \mathbb{E}\left[ \|\psi(\epsilon_1)\|^2\right], && (\text{Lemma} \ \ref{Lem:RK} \ \text{Part}\ \ref{itm:rk3})
\\ & = O(n^{-1}).
\end{align*}

Combining the bounds on $Z_1$ and $Z_2$, using the law of iterated expectations, we find
\begin{align*}
\mathbb{E} \left[ \left\| \frac{1}{n} \sum_{i=1}^n \frac{1}{m_i} \sum_{j=1}^{m_i} \psi(\epsilon_{ij})\mathcal{R}_{p,r,\lambda}(T_{ij},\cdot) \right\|_{r,\lambda}^2  \right] & = O\left(\frac{1}{n} +\frac{\min(K, \lambda^{-1/(2r)})}{nm}  \right).
\end{align*}
Markov's inequality yields 
\begin{align*}
\sup_{f \in S_{K}^p([0,1]):\|f\|_{r,\lambda} \leq D} |I_3(f)| \leq C_n^{1/2} O_{\mathbb{P}}\left( \left\{\frac{1}{n} +  \frac{\min(K, \lambda^{-1/(2r)})}{nm} \right\}^{1/2}\right) D = O_{\mathbb{P}}(1) D C_n,
\end{align*}
as the term inside the curly brackets is bounded above by $C_n^{1/2}$. This completes the proof.
\end{proof}

\begin{lemma}
\label{Lem:I4}

Under the conditions of Theorem~\ref{Thm:1},
\begin{align*}
\sup_{f \in S_{K}^p([0,1]): \|f\|_{r,\lambda} \leq D} |I_2(f)| = o_{\mathbb{P}}(1) C_n,
\end{align*}
as $n \to \infty$, i.e., \eqref{asser:2} holds.
\end{lemma}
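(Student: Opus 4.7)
The plan is to view $I_2(f)$ as a centered empirical process indexed by the ball $\mathcal{F}=\{f\in S_K^p([0,1]):\|f\|_{r,\lambda}\leq D\}$ in the $(K+p)$-dimensional Hilbert space $(S_K^p([0,1]),\langle\cdot,\cdot\rangle_{r,\lambda})$, and to control $\sup_{\mathcal{F}}|I_2(f)|$ by combining a pointwise variance estimate with a covering argument. Throughout I write $\phi_{ij}(x):=\int_{R_{ij}}^{R_{ij}+C_n^{1/2}x}\{\psi(\epsilon_{ij}+u)-\psi(\epsilon_{ij})\}\,du$, so that $I_2(f)=n^{-1}\sum_{i}m_i^{-1}\sum_j\{\phi_{ij}(f(T_{ij}))-\mathbb{E}\phi_{ij}(f(T_{ij}))\}$.

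First I would extract two structural estimates on $\phi_{ij}$. Assumption \ref{A5} yields the pathwise Lipschitz bound $|\phi_{ij}(x)-\phi_{ij}(y)|\leq M_1 C_n^{1/2}|x-y|$, valid eventually because $C_n^{1/2}\|f\|_\infty+|R_{ij}|\leq \kappa$ by Lemmas \ref{Lem:SA} and \ref{Lem:RKHS} together with the rate conditions. Cauchy-Schwarz inside the integral combined with \ref{A6} then yields
\begin{align*}
\mathbb{E}[|\phi_{ij}(x)|^2\mid T_{ij}]\leq c\,|C_n^{1/2}x|^{2}\bigl(|R_{ij}|+|C_n^{1/2}x|\bigr),
\end{align*}
and since $|R_{ij}|=O(K^{-j})=O(C_n^{1/2})$ and $|C_n^{1/2}x|\leq c_0 C_n^{1/2}K^{1/2}D$, the factor in brackets is $o(1)$. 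Independence across subjects, combined with the within-subject inequality $\bigl(\sum_{j\leq m_i}a_j\bigr)^{2}\leq m_i\sum_j a_j^{2}$, then gives the pointwise bound $\Var(I_2(f))\leq c\,n^{-1}C_n D^2(K^{-j}+C_n^{1/2}K^{1/2}D)$, which is $o(C_n^{2})$ after invoking $C_n\geq\max(n^{-1},K^{-2j})$.

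To lift this to the whole ball I would use a covering argument. The pathwise Lipschitz estimate, Jensen's inequality, and Lemma \ref{Lem:RKHS} yield $|I_2(f)-I_2(g)|\leq c_0 C_n^{1/2}K^{1/2}\|f-g\|_{r,\lambda}$, so an $\epsilon$-net $\{f_k\}$ of $\mathcal{F}$ in $\|\cdot\|_{r,\lambda}$ of cardinality $N_\epsilon\leq(3D/\epsilon)^{K+p}$ reduces the supremum to $\max_k|I_2(f_k)|+c_0 C_n^{1/2}K^{1/2}\epsilon$. Choosing $\epsilon\asymp C_n^{1/2}K^{-1/2}$ makes the deterministic remainder $o(C_n)$, and for each net point I apply Bernstein's inequality to the independent centered summands $X_i/n$ with per-subject envelope $\lesssim C_n^{1/2}K^{1/2}D/n$ and variance proxy as above, concluding after a union bound that $\max_k|I_2(f_k)|=o_{\mathbb{P}}(C_n)$ provided $(K+p)\log(3D/\epsilon)\ll n C_n^{1/2}/K^{1/2}$.

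The main obstacle is the calibration of this last inequality. With $\epsilon\asymp C_n^{1/2}K^{-1/2}$ and $C_n\geq n^{-1}$ one has $\log(3D/\epsilon)\lesssim\log(nK)$, and the requirement becomes $K^{3/2}\log n\lesssim nC_n^{1/2}\geq n^{1/2}$, i.e.\ $K^{3}\log^{2} n\lesssim n$; this is exactly what the hypothesis $n^{\delta-1}K^3\to 0$ in Theorem \ref{Thm:1} delivers. The same calculation also explains the remark following the theorem: when $\psi$ is Lipschitz, \ref{A6} can be sharpened to an $O(u^2)$ rate, which shaves one power of $K$ from both the variance and the Bernstein envelope and thereby relaxes the requirement to $n^{\delta-1}K^2\to 0$.
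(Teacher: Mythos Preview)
Your route is essentially the same as the paper's: both arguments rest on the identical three structural estimates --- the pathwise Lipschitz bound from \ref{A5}, the second-moment bound from Cauchy--Schwarz and \ref{A6}, and the sup-norm envelope from Lemma~\ref{Lem:RKHS} --- and both then finish with an entropy/covering argument on the $(K+p)$-dimensional ball. The only substantive difference is in packaging: the paper invokes the bracketing-entropy maximal inequality of \citet[Theorem~8.13]{VDG:2000} and computes the entropy integral to obtain $\sup_{\mathcal{B}_D}|n^{-1/2}\sum_i Z_i(f)|=O_{\mathbb{P}}(K^{3/4}C_n^{3/4}\log^{1/2}n)$, whereas you use a single-scale net with Bernstein's inequality and a union bound. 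Your more elementary variant reaches the same calibration $K^{3/2}\log n \ll nC_n^{1/2}$, hence $K^3\log^2 n \ll n$, which is precisely what $n^{\delta-1}K^3\to 0$ supplies, and your closing remark on the Lipschitz-$\psi$ relaxation to $n^{\delta-1}K^2\to 0$ matches the paper's.

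One small glitch: with $\epsilon\asymp C_n^{1/2}K^{-1/2}$ the deterministic remainder $c_0 C_n^{1/2}K^{1/2}\epsilon$ is $O(C_n)$, not $o(C_n)$. The fix is immediate --- take $\epsilon = C_n^{1/2}K^{-1/2}/a_n$ for any $a_n\to\infty$ slowly (say $a_n=\log n$), which makes the remainder $C_n/a_n=o(C_n)$ while $\log N_\epsilon\lesssim K\log(a_n n)$ still satisfies the required calibration --- but the sentence as written is not quite right.
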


\begin{proof}
Our proof is long and consists of three parts. On $S_{K}^p([0,1])$ and for $i=1, \ldots, n$, consider the independent mean-zero processes
\begin{align*}
Z_i(f) &= \frac{1}{m_i} \sum_{j=1}^{m_i} \left\{ \int_{R_{ij}}^{R_{ij}+C_n^{1/2}f(T_{ij})}  \{\psi\left(\epsilon_{ij}+u\right) 
- \psi(\epsilon_{ij}) \}\dd u \right.
\\ & \quad - 
\left. \mathbb{E} \left[\int_{R_{ij}}^{R_{ij}+C_n^{1/2}f(T_{ij})}  \{\psi\left(\epsilon_{ij}+u\right) - \psi(\epsilon_{ij}) \}\dd u \right]  \right\}, \quad (i=1, \ldots, n).
\end{align*}
Set $\mathcal{B}_D = \{f \in S_{K}^p([0,1]): \|f\|_{r,\lambda}\leq D\}$.  We will (i) derive a uniform bound of the $Z_i$ on $\mathcal{B}_D$, (ii) derive a uniform bound on the sum of the variances and (iii) derive a bound on the relevant bracketing number on $\mathcal{B}_D$. With all these ingredients we then apply Theorem 8.13 of \citep{VDG:2000}.

Let us begin by recalling that in the proof of Lemma~\ref{Lem:I1} we showed that the domain of integration shrinks to zero. So, applying the triangle inequality and assumption \ref{A5}, we get
\begin{align}
\label{eq:unbound}
\sup_{g \in \mathcal{B}_D} |Z_i(f)| \leq 2 M_1 C_n^{1/2} \sup_{f \in \mathcal{B}_D}\|f\|_{\infty} \leq c_0 C_n^{1/2} \min(K^{1/2}, \lambda^{-1/(4r)}) \leq c_0 C_n^{1/2} K^{1/2},
\end{align}
where to obtain the second-to-last inequality we have used Lemma \ref{Lem:RKHS} and we have absorbed $D$ into $c_0$ (the value of $D$ is not important for this derivation).

On the other hand, using the inequality $\Var[X] \leq \mathbb{E}[|X|^2]$, for the mean-centered processes $Z_i(f)$ we find
\begin{align*}
\frac{1}{n} \sum_{i=1}^n \Var[Z_i(f)] & \leq \frac{1}{n} \sum_{i=1}^n \frac{1}{m_i^2} \mathbb{E} \left[ \left| \sum_{j=1}^{m_i} \int_{R_{ij}}^{R_{ij}+C_n^{1/2}f(T_{ij})}  \{\psi\left(\epsilon_{ij}+u\right) - \psi(\epsilon_{ij}) \}\dd u \right|^2  \right]
\\ &  \leq \frac{1}{n} \sum_{i=1}^n \frac{1}{m_i} \sum_{j=1}^{m_i} \mathbb{E} \left[ \left| \int_{R_{ij}}^{R_{ij}+C_n^{1/2}f(T_{ij})}  \{\psi\left(\epsilon_{ij}+u\right) - \psi(\epsilon_{ij}) \}\dd u \right|^2 \right],   && (\text{By Schwarz}).
\end{align*}
Now, conditionally on $\mathcal{T}$, using the integral form of the Schwarz inequality, we have
\begin{align*}
 \mathbb{E} & \left[ \left| \int_{R_{ij}}^{R_{ij}+C_n^{1/2}f(T_{ij})}  \{\psi\left(\epsilon_{ij}+u\right) - \psi(\epsilon_{ij}) \}\dd u \right|^2 \bigg \vert \mathcal{T} \right]
 \\ & \quad \leq C_n^{1/2} |f(T_{ij})|\left| \int_{R_{ij}}^{R_{ij}+C_n^{1/2}f(T_{ij})}  \mathbb{E} [|\psi\left(\epsilon_{ij}+u\right) - \psi(\epsilon_{ij})|^2  | \mathcal{T}]\dd u \right|
 \\ & \leq c_0 C_n^{1/2} |f(T_{ij})| \left| \int_{R_{ij}}^{R_{ij}+C_n^{1/2}f(T_{ij})} |u| \dd u \right| \quad (\text{By Assumption} \ \ref{A6})
 \\ & \leq c_0 C_n^{3/2} |f(T_{ij})|(1+|f(T_{ij})|^2)
  \\ & \leq c_0 C_n^{3/2} \min(K^{1/2}, \lambda^{-1/(4r)})(1+|f(T_{ij})|^2) \quad (\text{By Lemma} \ \ref{Lem:RKHS}),
\end{align*}
and to obtain the second-to-last inequality we have used the antiderivative $\int |u| \dd u = u|u|/2$ and $K^{-j} \leq C_n^{1/2}$. The constant $c_0$ does not depend on $f \in \mathcal{B}_D$ or on $\mathcal{T}$, hence taking expectations with respect to $\mathcal{T}$ and using that, for $ f\in \mathcal{B}_D$, $\mathbb{E}[|f(T_{ij})|^2] = \|f\|^2 \leq \|f\|_{r,\lambda}^2 \leq D^2$, we obtain
\begin{align}
\label{eq:varbound}
\sup_{f \in \mathcal{B}_D} \left[n^{-1} \sum_{i=1}^n \Var[Z_i(f)] \right]  \leq c_0 C_n^{3/2} \min(K^{1/2}, \lambda^{-1/(4r)}) \leq c_0 C_n^{3/2} K^{1/2}.
\end{align}

Finally, take any $f,f^{\prime} \in S_{K}^p([0,1])$.  Recalling that the domain of integration shrinks to zero, assumption \ref{A4} yields
\begin{align}
\label{eq:Lipbound}
\max_{1 \leq i \leq n}  |Z_{i}(f) - Z_{i}(f^{\prime})| &\leq c_0 C_n^{1/2} \sup_{t \in [0,1]}|f(t)-f^{\prime}(t)| \nonumber
\\ &\leq c_0 C_n^{1/2} K^{1/2}  \|f-f^{\prime}\|,
\end{align}
where the bound of the sup-norm in terms of the $\mathcal{L}^2([0,1])$ follows as in the proof of Proposition 1 in \citep{Kal:2023}, which, in particular, shows that, under assumption \ref{A2}, $\|g\|_{\infty} \leq c_0 K^{1/2} \|g\|$ for every $g \in S_{K}^p([0,1])$. For every $f \in \mathcal{B}_D$, we have $\|f\|\leq \|f\|_{r,\lambda} \leq D$. By Corollary 2.6 of \citep{VDG:2000}, for the $\delta$-entropy in the $\mathcal{L}^2([0,1])$-norm of the class of functions $\mathcal{B}_D$, $H(\delta, \mathcal{B}_D, \|\cdot\|)$, we have the bound
\begin{align}
\label{eq:entrbound}
\mathcal{H}(\delta, \mathcal{B}_D, \|\cdot\|) \leq (K+p) \log \left( \frac{4D}{\delta}+ 1\right), \quad \delta>0.
\end{align}
But \eqref{eq:Lipbound} shows that a cover of $\mathcal{B}_D$ in the $\mathcal{L}^2([0,1])$ of radius $\delta /(C_n^{1/2} K^{1/2})$ provides a $\delta$-cover for the generalized entropy with bracketing of Definition 8.1 of \citep{VDG:2000}. Therefore, setting $\delta = u/(C_n^{1/2} K^{1/2})$, taking square roots of \eqref{eq:entrbound}, integrating from $0$ to $C_n^{3/4} K^{1/4}$ (the square root of \eqref{eq:varbound} up to a constant) and changing variables yields
\begin{align*}
(K+p)^{1/2}\int_{0}^{C_n^{3/4} K^{1/4}} \log^{1/2} \left( \frac{K^{1/2} C_n^{1/2}}{u}+1 \right) \dd u & = (K+p)^{1/2} C_n^{1/2}
\\ & \quad \times  \int_{0}^{C_n^{1/4}K^{1/4}} \log^{1/2}\left(\frac{K^{1/2}}{u}+1 \right) \dd u
\\ & \sim K^{3/4} C_n^{3/4} \log^{1/2}(n),
\end{align*}
as, by assumption, $n^{\delta-1}K^3 \to 0$ for some $\delta>0$ and so $K \leq c_0 n^{(1-\delta)/3}$. Taking $\alpha := K^{3/4} C_n^{3/4} \log^{1/2}(n)$ in the notation of \citep{VDG:2000}, the assumptions (8.43) and (8.45) are satisfied. Therefore, Theorem 8.13 of that author applies yielding
\begin{align*}
\sup_{g \in \mathcal{B}_D} \left|n^{-1/2} \sum_{i=1}^n Z_i(f) \right| =  O_{\mathbb{P}}(K^{3/4} C_n^{3/4} \log^{1/2}(n) ).
\end{align*}
But $I_2(f) = n^{-1} \sum_{i=1}^n Z_i(f)$, so using that $C_n^{-1/4} \leq n^{1/4}$, we find 
\begin{align*}
\frac{\sup_{f \in \mathcal{B}_D} |I_2(f)|}{C_n} = O_{\mathbb{P}}(1) \frac{K^{3/4} \log^{1/2}(n)}{n^{1/2} C_n^{1/4}} \leq O_{\mathbb{P}}(1) \frac{K^{3/4} \log^{1/2}(n)}{n^{1/4}} \to 0,
\end{align*}
in probability, by our limit assumption $n^{\delta-1} K^3 \to 0 $ for some small $\delta>0$. This concludes the proof of the lemma.

\end{proof}

\end{document}